\documentclass[acmsmall]{acmart}

\usepackage{prelude}
\usepackage{tkz-graph}

\usetikzlibrary{shapes.geometric, arrows}

\newtoggle{widecol}
\toggletrue{widecol}

\newcommand{\captiondetail}{%
  \justifying\small\mbox{}\\[-\baselineskip]\noindent}
\newcommand{\nocaptiondetail}{\vspace{-0.75\baselineskip}}
\newcommand{\squish}{\iftoggle{widecol}{}{\vspace{-0.75\baselineskip}}}

\newcommand{\proofin}[2][Proof]{%
  \begin{proof}[#1] See \fref{#2}. \noqed \end{proof}}
\newcommand{\proofof}[1]{Proof of \fref{#1}}

\newcommand{\integral}[2]{\mkern-3.5mu\int\displaylimits_{#1}^{#2}\mkern-3.5mu}
\newcommand{\nudgel}{\mkern-1.75mu\cdot\mkern-3.25mu}
\newcommand{\nudgec}{\mkern-2.5mu\cdot\mkern-2.5mu}

\newcommand{\tail}[2][]{\appopt{\subopt{\ol{F}}{#1}}{#2}}
\newcommand{\density}[2][]{\appopt{\subopt{f}{#1}}{#2}}

\newcommand{\stages}[1][]{\subopt{\mc{I}}{#1}}
\newcommand{\size}[1]{\subopt{X}{#1}}
\newcommand{\totalsize}[1]{\subopt{X}{#1}}
\newcommand{\trans}[3][]{\subopt{\supopt{p}{#1}}{#2#3}}
\newcommand{\init}[1][]{\subopt{a}{#1}}
\newcommand{\fin}[1][]{\subopt{z}{#1}}
\newcommand{\st}[3][]{#1\angle{#2, #3}}
\newcommand{\sts}{\vec{\imath x}}
\newcommand{\hazard}[2]{\appopt{h_{#1}}{#2}}

\newcommand{\fair}[1]{\subopt{M}{#1}}
\newcommand{\gittins}[1]{\subopt{G}{#1}}
\newcommand{\prevailing}[2]{R_{\st{#1}{#2}}}
\newcommand{\freezing}[2]{S_{\st{#1}{#2}}}
\newcommand{\profit}[2]{\appopt{\subopt{V}{#1}}{#2}}
\newcommand{\profitinv}[2]{\appopt{\subopt{V^{-1}}{#1}}{#2}}

\newcommand{\jt}[1]{\texttt{\scalebox{1.04}{#1}}}
\newcommand{\J}{\jt{J}}
\newcommand{\K}{\jt{K}}
\renewcommand{\L}{\jt{L}}
\newcommand{\single}[1]{[#1]}
\newcommand{\seq}{\rhd}

\newcommand{\Tqueue}{T_{\text{Q}}}

\newcommand{\load}[1]{\sqappopt{\rho}{#1}}
\newcommand{\frozen}[1]{s[#1]}
\newcommand{\cost}[2][]{\appopt{\subopt{\phi}{#1}}{#2}}
\newcommand{\costA}[2]{\appopt{\sqappopt{\Phi_\text{A}}{#1}}{#2}}
\newcommand{\costB}[2]{\appopt{\sqappopt{\Phi_\text{B}}{#1}}{#2}}
\newcommand{\costC}[2]{\appopt{\sqappopt{\Phi_\text{C}}{#1}}{#2}}
\newcommand{\costP}[1]{\appopt{\Phi_\text{P}}{#1}}
\newcommand{\costQ}[1]{\appopt{\Phi_\text{Q}}{#1}}

\newcommand{\bypass}[2]{\sqappopt{\subopt{\Gamma}{#2}}{#1}}
\newcommand{\serve}[1]{\Delta_{#1}}
\newcommand{\arrive}{\Lambda}

\acmclean

\begin{document}

\title{%
  Optimal Scheduling and Exact Response Time Analysis for Multistage Jobs}

\author{Ziv Scully}
\affiliation{%
  \institution{Carnegie Mellon University}
  \department{Computer Science Department}
  \streetaddress{5000 Forbes Ave}
  \city{Pittsburgh}
  \state{PA}
  \postcode{15213}
  \country{USA}}
\email{zscully@cs.cmu.edu}

\author{Mor Harchol-Balter}
\affiliation{%
  \institution{Carnegie Mellon University}
  \department{Computer Science Department}
  \streetaddress{5000 Forbes Ave}
  \city{Pittsburgh}
  \state{PA}
  \postcode{15213}
  \country{USA}}
\email{harchol@cs.cmu.edu}

\author{Alan Scheller-Wolf}
\affiliation{%
  \institution{Carnegie Mellon University}
  \department{Tepper School of Business}
  \streetaddress{5000 Forbes Ave}
  \city{Pittsburgh}
  \state{PA}
  \postcode{15213}
  \country{USA}}
\email{awolf@andrew.cmu.edu}

\begin{abstract}
  Scheduling to minimize mean response time
  in an M/G/1 queue is a classic problem.
  The problem is usually addressed in one of two scenarios.
  In the perfect-information scenario,
  the scheduler knows each job's exact size, or service requirement.
  In the zero-information scenario,
  the scheduler knows only each job's size distribution.
  The well-known shortest remaining processing time (SRPT) policy is optimal
  in the perfect-information scenario,
  and the more complex Gittins policy is optimal
  in the zero-information scenario.

  In real systems the scheduler often has \emph{partial} but incomplete
  information about each job's size.
  We introduce a new job model, that of \emph{multistage jobs},
  to capture this partial-information scenario.
  A multistage job consists of a sequence of \emph{stages},
  where both the sequence of stages and stage sizes are unknown,
  but the scheduler always knows \emph{which stage} of a job is in progress.
  We give an optimal algorithm for scheduling multistage jobs in an M/G/1 queue
  and an exact response time analysis of our algorithm.
\end{abstract}

\maketitle

\section{Introduction}

Scheduling jobs to minimize mean response time
in preemptive single-server queueing systems,
such as the M/G/1 queue,
has been the subject of countless papers over the past several decades.
When the scheduler knows each job's exact \emph{size},
or service requirement,
the \emph{shortest remaining processing time} (SRPT) algorithm is optimal.
When the scheduler does not know jobs' sizes,
the optimal policy is the \emph{Gittins policy}
\citep{book_gittins, m/g/1_gittins_aalto},
which uses the job size distribution to
give each job a priority based on its \emph{age},
namely the amount of service the job has received so far.
Although both SRPT and the Gittins policy
have long been known to be optimal for their respective settings,
only SRPT has been analyzed in the past \citep{srpt_analysis_schrage}.
There is no known closed form for mean response time
under the Gittins policy.

In some sense, SRPT and the Gittins policy
represent opposite extremes of optimal scheduling algorithms.
SRPT treats the complete information case:
we know each individual job's exact size.
The Gittins policy treats the zero information case:
we have no information about individual jobs,
leaving us only the overall size distribution
and each job's age on which to base a policy.
This prompts a natural question:
what policy is optimal in the \emph{partial information} case,
where we have nonzero but incomplete information
about each job's remaining size?

To model jobs with partial information about remaining size,
we propose a new job model: \emph{multistage jobs}.
During service,
a multistage job progresses through a sequence of \emph{stages},
each of which has its own individual \emph{size}, or service requirement.
A job completes when its last stage completes.
The sequence of stages a job goes through may be stochastic,
and the server cannot influence the sequence of stages.
Whereas a standard M/G/1 with unknown job sizes
uses only the age of each job when scheduling,
multistage jobs provide the scheduler both
\begin{itemize}
\item
  the \emph{current stage} of the job and
\item
  the \emph{age of the current stage}.
\end{itemize}
Although stages may have general size distributions,
stage sizes and transitions are Markovian in the sense that
a job's history before its current stage
is independent of the sizes of and transitions between
its current and future stages.

Jobs with multiple stages are so common in applications
that it is surprising they have not received more attention.
For instance, the following scenarios can be modeled with multistage jobs:
\begin{itemize}
\item
  At every moment in time,
  a Google ad server preemptively chooses which ad placement job to serve.
  Ad placement has three stages:
  preprocessing, which has uniform size distribution;
  targeting, which has bounded Pareto size distribution with a low bound;
  and selection,
  which has bounded Pareto size distribution with a high bound.\footnote{%
    Personal communication.}
  Job~$\jt{G}$ in \fref{fig:multistage_examples} models an ad placement request
  as a multistage job.
\item
  In a small general repair shop,
  a single repairer preemptively chooses which of multiple items to fix.
  Each item goes through two stages of service: diagnosis and repair.
  The amount of time repair takes depends on the problem with the item,
  which the repairer knows only after diagnosis.
  Job~$\jt{R}$ in \fref{fig:multistage_examples} models this repair process
  as a multistage job.
\item
  Drugs are developed through multiple stages of research and trials.
  A single team preemptively chooses
  which of multiple drug projects to work on
  based on the state of each project.
\end{itemize}

\begin{figure}
  \centering
  \begin{tikzpicture}[scale=0.8, thick]
    \newcommand{\leftlbl}{left, fill=none, shift={(0,0.1)}}
    \newcommand{\rightlbl}{right, fill=none, shift={(0,0.1)}}
    \SetUpVertex[Math,LabelOut]
    \tikzstyle{EdgeStyle}=[->, >=stealth']
    \tikzstyle{VertexStyle}=[shape=circle, minimum width=1em, draw]
    \begin{scope}[shift={(-1.5,0)}]
      \node at (0, 0.7) {$\jt{G}$};
      \Vertex{P}
      \SO(P){T}
      \SO(T){S}
      \Edges(P,T,S)
    \end{scope}
    \begin{scope}[shift={(3,-0.5)}]
      \node at (0, 1.2) {$\jt{R}$};
      \Vertex[Lpos=90]{D}
      \SOWE[L={R_{\text{easy}}}, Lpos=-90](D){R_e}
      \SOEA[L={R_{\text{hard}}}, Lpos=-90](D){R_h}
      \Edge[label=$p$, labelstyle={\leftlbl}, style={pos=0.4}](D)(R_e)
      \Edge[label=$1 - p$, labelstyle={\rightlbl}, style={pos=0.4}](D)(R_h)
    \end{scope}
  \end{tikzpicture}

  \captiondetail
  Two examples of multistage jobs.
  Job~$\jt{G}$ is a Google ad placement request,
  which always goes through the same three stages before completion.
  The first stage, preprocessing, has uniform size distribution~$P$.
  The other stages, targeting and selection,
  have bounded Pareto size distributions $T$ and $S$
  with low and high upper bound, respectively.
  Job~$\jt{R}$ is a job in a general repair shop.
  The first stage is diagnosis, which takes time~$D$.
  The other stages are repair work.
  With probability~$p$, repair work is easy and takes time~$R_{\text{easy}}$,
  and with probability $1 - p$,
  repair work is hard and takes time~$R_{\text{hard}}$.
  \squish

  \caption{Multistage Job Examples}
  \label{fig:multistage_examples}
\end{figure}

To highlight the difference between multistage jobs and previous job models,
consider multistage jobs $\jt{A}$, $\jt{B}$, and $\jt{C}$
in \fref{fig:multistage_difference}.
While all three jobs have total size distribution $S + 2d$,
where $S$ is stochastic and $d$ is deterministic,
their stage structures make them very different.
\begin{itemize}
\item
  In job~$\jt{A}$,
  the stochastic stage is \emph{after} the deterministic stage:
  after serving $\jt{A}$ for time~$2d$,
  we know the remaining size has distribution~$S$.
\item
  In job~$\jt{B}$,
  the stochastic stage is \emph{before} the deterministic stage:
  after serving $\jt{B}$ for a random amount of time~$S$,
  we learn that the first stage has completed,
  at which point we know the remaining size is exactly~$2d$.
\item
  Job~$\jt{C}$ combines both of these effects:
  after serving $\jt{C}$ for time~$d$,
  we then serve it for a random amount of time~$S$
  until we learn that the second stage has completed,
  at which point the remaining size is~$d$.
\end{itemize}

Suppose we have a preemptive single-server system
containing jobs $\jt{A}$, $\jt{B}$, and~$\jt{C}$.
Which job should we serve first to minimize mean response time?
Is it better to serve $\jt{A}$ first
to get the deterministic section out of the way?
Is it better to serve $\jt{B}$ first,
just in case we get lucky and its first stage finishes quickly?
Is the compromise $\jt{C}$ a better or worse choice than each of the others?
\begin{quote}
  How do we best \emph{exploit multistage structure} when scheduling
  to minimize mean response time?
\end{quote}
This is a difficult question to answer because it requires comparing
not just jobs' current size distributions
but also their \emph{potential future size distributions}.
The differences between $\jt{A}$, $\jt{B}$, and $\jt{C}$
lie not in their initial size distributions,
all $S + d$,
but rather in how their size distributions might evolve with service.
Adding to the difficulty is the fact that
a multistage job's remaining size distribution evolves \emph{stochastically},
jumping every time a stage completes.
This is more complicated than a single-stage job in the standard M/G/1,
whose remaining size distribution depends only on its age.
Jobs with stochastic stage sequences,
such as job~$\jt{R}$ in \fref{fig:multistage_examples},
exacerbate the difficulty.

\begin{figure}
  \centering
  \begin{tikzpicture}[scale=0.8, thick]
    \SetUpVertex[Math,LabelOut]
    \tikzstyle{EdgeStyle}=[->, >=stealth']
    \tikzstyle{VertexStyle}=[shape=circle, minimum width=1em, draw]
    \begin{scope}[shift={(-3,0)}]
      \node at (0, 0.7) {$\jt{A}$};
      \Vertex[L={d}]{d1}
      \SO[L={d}](d1){d2}
      \SO(d2){S}
      \Edges(d1,d2,S)
    \end{scope}
    \begin{scope}[shift={(0,0)}]
      \node at (0, 0.7) {$\jt{B}$};
      \Vertex{S}
      \SO[L={d}](S){d1}
      \SO[L={d}](d1){d2}
      \Edges(S,d1,d2)
    \end{scope}
    \begin{scope}[shift={(3,0)}]
      \node at (0, 0.7) {$\jt{C}$};
      \Vertex[L={d}]{d1}
      \SO(d1){S}
      \SO[L={d}](S){d2}
      \Edges(d1,S,d2)
    \end{scope}
  \end{tikzpicture}

  \captiondetail
  Let $S$ be a stochastic size distribution
  and $d$ be a deterministic size.
  The three jobs $\jt{A}$, $\jt{B}$, and $\jt{C}$
  all have the same total size distribution $S + 2d$,
  but they have different stage structures.
  \squish

  \caption{Different Stage Structures for Same Total Size}
  \label{fig:multistage_difference}
\end{figure}

To our knowledge, there is \emph{no prior theoretical work}
specifically pertaining to multistage jobs in a preemptive setting.
Analyzing the mean response time of
virtually any scheduling policy for multistage jobs\footnote{%
  Here we mean a scheduling policy that makes some use of multistage structure.
  One can, of course, analyze classic scheduling policies
  by simply ignoring stage information.}
is an open problem,
let alone finding and analyzing the optimal policy.
The closest prior work is
\emph{Klimov's model} and variations
\citep{klimov's_model_klimov, tax_undiscounted_whittle},
which can model jobs with nonpreemptible stages,
whereas our stages are preemptible.
Nonpreemptible stages require far fewer scheduling decisions,
which makes Klimov's model simple to optimally schedule
using a variant of the Gittins policy
\citep{branching_bandits_bertsimas, multiple_processors_weiss, book_gittins}.

The main contribution of this paper is
an \emph{optimal scheduling algorithm} for multistage jobs.
Thanks to a \emph{novel reduction technique},
our algorithm is no more complex than
optimally scheduling single-stage jobs.
We \emph{exactly analyze the performance} of our algorithm,
obtaining a closed-form expression mean response time.

\subsection{Background and Challenges}
\label{sub:challenges}

As mentioned above,
we will reduce the problem of scheduling multistage jobs
to the problem of scheduling single-stage jobs with unknown size,
as in the standard M/G/1.
It thus behooves us to review the single-stage setting,
in which the \emph{Gittins policy} minimizes mean response time
\citep{book_gittins, m/g/1_gittins_aalto}.
Remarkably, the Gittins policy has a very simple form:
for each job,
determine a quantity called its \emph{Gittins index},
then serve the job of maximal Gittins index.
While the Gittins policy has long been known to be optimal,
its response time resisted analysis until recently,
when \citet{soap_scully} analyzed the class of \emph{SOAP policies},
which includes the Gittins policy.

One might wonder whether we can apply the Gittins index to multistage jobs.
The answer is yes, but with a caveat:
it is possible to define the Gittins index of a multistage job,
and the resulting Gittins policy minimizes mean response time,
but \emph{computing} a multistage job's Gittins index requires solving
a multidimensional optimization problem with one dimension per stage.
This makes computing Gittins indices intractable.
We might hope to tame this complexity by, for instance,
computing the Gittins index of a multistage job
by somehow combining the Gittins indices of its individual stages,
which would allow us to leverage prior work on single-stage jobs,
but there is no known way to do this.

Even if we could compute the Gittins policy for multistage jobs,
analyzing its mean response time would still be challenging.
The analysis of SOAP policies given by \citet{soap_scully}
is for single-stage jobs.
While a similar method could in principle be used for multistage jobs,
it would require considering all the possible sequences of stages,
of which there can be exponentially many.
No known analysis method scales gracefully to multistage jobs.

\subsection{Contributions}

In this paper,
we solve the two main challenges outlined in \fref{sub:challenges}:
we give a simple method to compute the Gittins index of a multistage job,
and we give a simple closed form for
the mean response time of the Gittins policy for multistage jobs.
Specifically, our contributions are as follows:
\begin{itemize}
\item
  We introduce \emph{multistage jobs},
  a new job model in which the scheduler has partial information
  about a job's progress (\fref{sec:model}).
\item
  We introduce \emph{single-job profit} (SJP),
  a new framework for the Gittins index
  that is especially well suited for multistage jobs (\fref{sec:sjp}).
\item
  Using the SJP framework, we state and prove a new \emph{composition law},
  which reduces a multistage job's Gittins index computation
  into two smaller SJP computations which,
  unlike Gittins indices, naturally compose (\fref{sec:composition}).
  We give two applications of the composition law:
  \begin{itemize}
  \item
    We show how repeated application of the composition law
    reduces any multistage job's Gittins index computation
    to SJP computations for \emph{single-stage jobs} (\fref{sub:single-stage}).
  \item
    In special case of a fixed stage sequence with discrete size distributions,
    the problem simplifies considerably,
    and our composition law provides a divide-and-conquer algorithm
    for computing the Gittins index in $O(n \log n)$ time
    where $n$ is the total number of support points (\fref{sub:discrete}).
  \end{itemize}
\item
  Using the SJP framework,
  we \emph{exactly analyze mean response time} under our algorithm,
  giving a closed form in terms of SJP (\fref{sec:analysis}).
\item
  We give several \emph{practical takeaways}
  for systems with multistage jobs (\fref{sec:takeaways}):
  \begin{itemize}
  \item
    We demonstrate that, as a rough guideline,
    it is best to \emph{front-load stages with more variable length}
    (\fref{sub:prioritize_variable}).
    This answers the question posed by \fref{fig:multistage_difference}.
  \item
    We show that our algorithm \emph{significantly reduces response time}
    compared to all algorithms that do not exploit multistage structure
    (\fref{sub:impact}).
  \end{itemize}
\end{itemize}
Together, our contributions constitute
the \emph{first theoretical analysis} of
multistage jobs with \emph{preemptible} stages
of \emph{general size distribution}.

\subsection{Related Work}

The Gittins index has been extensively studied in the standard M/G/1 setting
\citep{m/g/1_gittins_aalto, mlps_gittins_aalto, multiclass_ayesta,
  rs_slowdown_hyytia},
but there are few concrete results for more complex job models.
A notable exception is Klimov's model \citep{klimov's_model_klimov},
which is a nonpreemptive multiclass M/G/1 queue
with Bernoulli feedback:
upon completion, jobs of class~$i$ have some probability $p_{ij}$
of immediately returning as a class~$j$ job.
By thinking of classes as stages,
we can think of a job in Klimov's model as a job with multiple stages.
One can optimally schedule Klimov's model using the Gittins index
\citep[Section~4.9.2]{book_gittins}.
The fundamental difference between Klimov's model
and our model is that we allow \emph{preemption},
which makes the space of possible policies much larger
and makes Gittins index computation far more difficult.
A variation of Klimov's model studied by \citet{tax_undiscounted_whittle}
allows preemption
but requires each class's size distribution to be exponential,
whereas we study \emph{general size distributions}.

The Gittins index has been applied to many problems
in scheduling and beyond
\citep{book_gittins, four_proofs_weiss, original_gittins, review_gittins,
  gittins_index_weber, gittins_index_whittle, golf_dumitriu,
  achievable_region_bertsimas, tax_undiscounted_whittle}.
As such, there are many definitions of the Gittins index,
all essentially equivalent but each with different strengths,
such as providing intuition or enabling certain elegant proofs.
SJP provides another definition of the Gittins index
which is particularly well suited to working with multistage jobs.
See \fref{sub:prior_gittins} for a more detailed comparison
between SJP and prior Gittins index formulations.

\section{System Model}
\label{sec:model}

We consider the \emph{multistage M/G/1 queue},
a preemptive single-server queueing system in which
the jobs are \emph{multistage jobs}.
A multistage job goes through
a stochastic sequence of \emph{stages} during service.
Each stage has a \emph{size}, or service requirement.

In the multistage M/G/1,
jobs arrive as a Poisson process of rate~$\lambda$,
and each job's stage sizes and transitions
are independent of those of other jobs and the arrival process.
At every moment in time, the scheduler must decide which job to serve.
The scheduler does not know jobs' exact stage sequences
or stages' exact sizes ahead of time,
but it does know the distributions of possible sequences and sizes
(\fref{sub:anatomy}).
Our goal is to minimize \emph{mean response time},
the average time between a job's arrival and completion.
We assume a stable system,
meaning the expected total size of each job is less than~$1/\lambda$,
and a preempt-resume model,
meaning preemption and processor sharing are permitted
with no penalty or loss of work.

\subsection{Anatomy of a Multistage Job}
\label{sub:anatomy}

In this section we zoom in on a single job
and define its dynamics during service,
summarizing the resulting notation in \fref{tab:job_notation}.

Every multistage job is a pair of two pieces of data:
a \emph{type} and a \emph{state}.
We can think of a job's type as a ``map''
and a job's state as a ``location'' on the map.
As a job is served,
it's state evolves according to transition rules based on the job's type.
The scheduler knows the type and state of every job currently in the system.

A job \emph{type}~$\J$ specifies the following information,
which is all known to the scheduler.
\begin{itemize}
\item
  Type~$\J$ jobs have a finite set of possible \emph{stages},
  denoted~$\stages[\J]$.
  A stage is simply a label or name.
  Below we associate transition probabilities and a size distribution
  with each label.
\item
  Each type~$\J$ job starts in the \emph{initial stage}
  $\init[\J] \in \stages[\J]$.
\item
  There is a special \emph{final stage} $\fin[\J] \in \stages[\J]$
  that each type~$\J$ job enters upon completion.
\item
  Upon completing stage~$i$,
  a type~$\J$ job transitions to stage~$j$
  with \emph{transition probability}~$\trans[\J]{i}{j}$.
  The stages form an acyclic Markov chain
  absorbing at~$\fin[\J]$,
  meaning~$\fin[\J]$ is accessible from all other stages
  and no stage is accessible from itself.\footnote{%
    One could certainly model jobs with cyclic stage transitions,
    and many of our results still hold in this case,
    but we assume acyclic transitions for simplicity.}
\item
  Each stage $i \in \stages[\J]$ has a \emph{size distribution}~$\size{i}$
  representing the amount of service stage~$i$ requires.
  We write $\density[i]{}$ and $\tail[i]{}$
  for the density\footnote{%
    For simplicity of notation, we assume the density always exists,
    but our results generalize to size distributions with discrete components.}
  and tail functions, respectively, of~$\size{i}$.
  We always have $\size{\fin[\J]} = 0$.
\end{itemize}

A job's \emph{state}, denoted $\st{i}{x}$,
consists of the stage~$i$ in progress
and the \emph{age} $x \geq 0$ of the stage,
namely the amount of time the job has been served while in stage~$i$.
Stage transitions and stage sizes are independent,
so the evolution of a job's state $\st{i}{x}$ during service is Markovian,
even for generally distributed stage sizes~$\size{i}$.
When a type~$\J$ job is served in state $\st{i}{x}$,
the state experiences two kinds of transitions:
\begin{itemize}
\item
  The stage's age~$x$ increases continuously at rate~$1$.
\item
  The job's state stochastically jumps from $\st{i}{x}$ to $\st{j}{0}$
  at instantaneous rate $\trans[\J]{i}{j}\hazard{i}{x}$, where
  \begin{equation*}
    \hazard{i}{x} = \frac{\density[i]{x}}{\tail[i]{x}}
  \end{equation*}
  is the \emph{hazard rate} of stage size~$\size{i}$ at age~$x$.
\end{itemize}
A type~$\J$ job starts in state $\st{\init[\J]}{0}$ and
\emph{completes}, exiting the system,
upon jumping to state $\st{\fin[\J]}{0}$.
We write $\E{\totalsize{\J}}$ for the expected \emph{total size}
of a type~$\J$ job,
which is the expected amount of service time required for a type~$\J$ job
to go from $\st{\init[\J]}{0}$ to $\st{\fin[\J]}{0}$.

\begin{table}
  \caption{Summary of Multistage Job Notation}
  \label{tab:job_notation}

  \squish\begin{tabular}{@{}ll@{}}
    \toprule
    \emph{Notation} & \emph{Description} \\
    \midrule
    $\stages[\J]$ & set of possible stages of job type~$\J$ \\
    $\init[\J], \fin[\J]$ & initial and final stages of job type~$\J$ \\
    $\trans[\J]{i}{j}$
    & transition probability from $i$ to~$j$ in job type~$\J$ \\
    $\size{i}$ & size distribution of stage~$i$ \\
    $\st{i}{x}$ & state of a job at age~$x$ of stage~$i$ \\
    $\density[i]{}, \tail[i]{}, \hazard{i}{}$
    & density, tail, and hazard rate functions of $\size{i}$ \\
    $\E{\totalsize{\J}}$ & expected total size of~$\J$ \\
    $\st[\J]{i}{x}$
    & state-conditioned job type \\
    \bottomrule
  \end{tabular}\squish
\end{table}

\subsection{Terminology Conventions}
\label{sub:conventions}

In Sections~\ref{sec:sjp} and~\ref{sec:composition}
we study single-job profit,
a framework for computing the Gittins index of a multistage job.
To simplify exposition,
we assume that all jobs are in their initial state,
and when we say ``job~$\J$'' we mean a type~$\J$ job
in its initial state $\st{\init[\J]}{0}$.
To apply our results to a job in any state,
we can simply define a new job type.
Specifically, given job type~$\J$ and state $\st{i}{x}$,
the \emph{state-conditioned job type} $\st[\J]{i}{x}$
is the job type obtained by starting a type~$\J$ job in state $\st{i}{x}$.
That is, a type $\st[\J]{i}{x}$ job in state $\st{\init[{\st[\J]{i}{x}}]}{0}$
has future evolution stochastically equivalent to that of
a type~$\J$ job in state $\st{i}{x}$.
See \fref{app:definitions} for a formal definition.

In \fref{sec:analysis} we study the M/G/1 with multistage jobs.
To simplify notation,
we assume that all jobs have the same job type~$\J$
and omit the $\J$ from the notation in \fref{tab:job_notation}.
Despite using a single job type,
we can model a system with multiple job classes
with what we call a \emph{mixture composition}
(\fref{def:mixture}):
we give the initial stage~$a$ size $\size{a} = 0$,
causing each new job to immediately transition to what we think of as
the initial stage for its class.

\section{Single-Job Profit: A New Framework for the Gittins Index}
\label{sec:sjp}

In this section we describe \emph{single-job profit} (SJP),
a new framework for defining and computing the Gittins index of a job.
As we will see, SJP is particularly well suited to multistage jobs.
We first define SJP (\fref{sub:sjp_definition})
then relate it to the Gittins index (\fref{sub:sjp_gittins}).
Throughout this section we define several terms and notations related to SJP,
which we summarize in \fref{tab:sjp_notation}.

\subsection{Definition of Single-Job Profit}
\label{sub:sjp_definition}

We define \emph{single-job profit} (SJP) to be
an optimization problem concerning
a single multistage job~$\J$ and a potential \emph{reward}~$r \geq 0$.
At every moment in time,
we must choose between one of two actions:
\emph{serving} and \emph{giving up}.
While serving,
we incur cost at continuous rate~$1$
as $\J$'s state evolves as described in \fref{sub:anatomy}.
If $\J$ completes, we receive reward~$r$ and the process ends.
The other action, giving up,
immediately ends the process with no additional cost or reward.
Our goal is to maximize \emph{expected profit},
meaning the expected reward received minus expected cost incurred.

Because $\J$'s state evolution is Markovian,
to play optimally in SJP,
it suffices to consider policies that decide to serve or give up
based only on the current state.
Thus, a policy for SJP is a \emph{stopping policy}~$\pi$
consisting of a \emph{stopping age} $\pi(i) \geq 0$
for each stage~$i$.
A stopping policy~$\pi$ gives up if $\J$ enters state $\st{i}{\pi(i)}$
for some stage~$i$.
For example, the policy of never giving up
sets $\pi(i) = \infty$ for all stages~$i$,
and the policy of giving up immediately sets $\pi(\init[\J]) = 0$.

\begin{definition}
  \label{def:sjp}
  The \emph{optimal profit} of job~$\J$ and reward~$r$ in SJP is
  \begin{equation*}
    \profit{\J}{r}
    = \sup_\pi
      (r\P{\text{$\pi$ completes~$\J$}}
        - \E{\text{time $\pi$ serves~$\J$}}),
  \end{equation*}
  where the supremum is taken over stopping policies~$\pi$.
  The optimal profit as a function of reward~$r$
  is called $\J$'s \emph{SJP function}.
\end{definition}

It is intuitive that a state's SJP function is nondecreasing,
because a larger potential reward should not hurt the optimal profit.
In fact, we can make a much stronger statement.

\begin{lemma}
  \label{lem:sjp_properties}
  Every job's SJP function $\profit{\J}{r}$ is
  \begin{enumerate}[(i)]
  \item
    \label{item:convex_nondecreasing}
    convex and nondecreasing,
  \item
    \label{item:bounded_above}
    bounded above by~$r$,
  \item
    \label{item:bounded_below}
    bounded below by $\max\{0, r - \E{\totalsize{\J}}\}$,
    and
  \item
    \label{item:bounded_derivative}
    differentiable almost everywhere with derivative at most~$1$.
  \end{enumerate}
\end{lemma}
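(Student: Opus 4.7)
The plan is to recognize $\profit{\J}{r}$ as the pointwise supremum over stopping policies~$\pi$ of affine functions of~$r$: setting $p_\pi = \P{\text{$\pi$ completes~$\J$}} \in [0,1]$ and $c_\pi = \E{\text{time $\pi$ serves~$\J$}} \geq 0$, we have $\profit{\J}{r} = \sup_\pi(r p_\pi - c_\pi)$. This single observation yields essentially the entire lemma, since each of the four properties is either preserved by taking suprema of affine functions with slopes in $[0,1]$ or witnessed by an explicit policy.

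For \fref{item:convex_nondecreasing}, each affine summand $r \mapsto r p_\pi - c_\pi$ is convex and, since $p_\pi \geq 0$, nondecreasing in~$r$; both properties are preserved under pointwise supremum. For \fref{item:bounded_above}, $p_\pi \leq 1$ and $c_\pi \geq 0$ bound each summand by~$r$. For \fref{item:bounded_below}, I would exhibit two distinguished stopping policies. The policy that gives up immediately, $\pi(\init[\J]) = 0$, contributes~$0$. The policy that never gives up, $\pi(i) = \infty$ for every stage~$i$, achieves $p_\pi = 1$ and $c_\pi = \E{\totalsize{\J}}$, contributing $r - \E{\totalsize{\J}}$. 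Taking the larger of the two witnesses yields the claimed $\max\{0,\, r - \E{\totalsize{\J}}\}$ bound.

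For \fref{item:bounded_derivative}, the key estimate is Lipschitz continuity with constant~$1$: for any $\Delta > 0$, since $p_\pi \leq 1$,
\begin{equation*}
  \profit{\J}{r + \Delta}
  = \sup_\pi\bigl((r p_\pi - c_\pi) + \Delta p_\pi\bigr)
  \leq \sup_\pi(r p_\pi - c_\pi) + \Delta
  = \profit{\J}{r} + \Delta.
\end{equation*}
Combined with monotonicity from \fref{item:convex_nondecreasing}, this gives $0 \leq \profit{\J}{r+\Delta} - \profit{\J}{r} \leq \Delta$. Convex Lipschitz functions on $\R$ are differentiable off a countable set, with derivative everywhere equal to either one-sided slope, hence lying in $[0,1]$.

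The only step requiring nontrivial justification is the second half of \fref{item:bounded_below}, namely that the never-give-up policy completes a type-$\J$ job with probability~$1$ at total expected cost exactly $\E{\totalsize{\J}}$. This is intuitively immediate from the dynamics in \fref{sub:anatomy}, but formally uses the finiteness of $\stages[\J]$ together with the acyclic-absorbing structure of the stage chain to guarantee almost-sure absorption at $\fin[\J]$ in finite expected time. Once this is in hand, everything else follows mechanically from the affine-supremum representation.
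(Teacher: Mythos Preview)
Your proposal is correct and follows essentially the same approach as the paper: both recognize $\profit{\J}{r}$ as a supremum of affine functions $r \mapsto r p_\pi - c_\pi$, derive (i)--(iii) exactly as you do, and obtain (iv) from (i) and (ii). The only cosmetic difference is that the paper states (iv) as an immediate consequence of convexity plus the upper bound~$r$, whereas you give the explicit Lipschitz computation; your added remark justifying $p_\pi = 1$ for the never-give-up policy via the acyclic absorbing structure is a detail the paper leaves implicit.
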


\begin{proof}
  From \fref{def:sjp},
  we see $\profit{\J}{r}$ is a supremum of
  convex nondecreasing functions of~$r$,
  implying claim~\fref{item:convex_nondecreasing}.
  No policy can receive reward greater than~$r$ or incur cost less than~$0$,
  implying claim~\fref{item:bounded_above}.
  Possible policies for SJP include immediately giving up,
  which yields expected profit~$0$,
  and never giving up,
  which yields expected profit $r - \E{\totalsize{\J}}$,
  implying claim~\fref{item:bounded_below}.
  The conjunction
  of claims \fref{item:convex_nondecreasing} and~\fref{item:bounded_above}
  implies~\fref{item:bounded_derivative}.
\end{proof}

By \fref{lem:sjp_properties},
$\J$'s SJP function has an inverse $\profitinv{\J}{u}$ defined for all $u > 0$.
There are many possibilities for $\profitinv{\J}{0}$.
We choose the value that makes $\profitinv{\J}{u}$ continuous at $u = 0$,
which is the largest possible value.

\begin{definition}
  \label{def:fair}
  The \emph{SJP index} of job~$\J$ is
  \begin{align*}
    \fair{\J}
    &= \profitinv{\J}{0} \\
    &= \sup\{r \geq 0 \mid \profit{\J}{r} = 0\} \\
    &= \inf\{r \geq 0 \mid \profit{\J}{r} > 0\}.
  \end{align*}
\end{definition}

Put another way, the SJP index of job~$\J$
is the smallest reward~$r$ such that
it is optimal to serve $\J$ for at least an instant in SJP.
A state's SJP index is a measure of
how reluctant we are to serve a job in that state:
the more reluctant we are to serve the job,
the larger a reward we need to be offered to serve it.

\begin{table}
  \caption{Summary of Single-Job Profit Notation}
  \label{tab:sjp_notation}

  \centering
  \squish\begin{tabular}{@{}lll@{}}
    \toprule
    \emph{Notation} & \emph{Description} & \emph{Defined in} \\
    \midrule
    $\profit{\J}{r}$
      & SJP function of job~$\J$
      & \fref{def:sjp} \\
    $\fair{\J} = \profitinv{\J}{0}$
      & SJP index of job~$\J$
      & \fref{def:fair} \\
    $\gittins{\J} = 1/\fair{\J}$
      & Gittins index of job~$\J$
      & \fref{def:gittins} \\
    \bottomrule
  \end{tabular}\squish
\end{table}

\subsection{Single-Job Profit and the Gittins Index}
\label{sub:sjp_gittins}

We now relate SJP to the traditional definition of the Gittins index,
beginning with a review of the traditional definition.

The \emph{Gittins policy} is a policy for minimizing mean response time
in a variety of queueing systems,
including the standard M/G/1 \citep{m/g/1_gittins_aalto, mlps_gittins_aalto}
and the multistage M/G/1 considered herein.
The policy has a simple form:
always serve the job of maximal \emph{Gittins index}.
A job's Gittins index, defined below,
is unaffected by other jobs in the system.

\begin{definition}[{\citep[Section~3.2]{book_gittins}}]
  \label{def:gittins}
  The \emph{Gittins index} of job~$\J$ is
  \begin{equation*}
    \gittins{\J}
    = \sup_\pi \frac{%
        \P{\text{$\pi$ completes~$\J$}}}{%
        \E{\text{time $\pi$ serves~$\J$}}},
  \end{equation*}
  where the supremum is taken over stopping policies~$\pi$
  that do not immediately give up.
\end{definition}

One interpretation of $\gittins{\J}$
is as the maximum possible ``completion rate'' of~$\J$:
the numerator in \fref{def:gittins} is an expected number of job completions,
and the denominator is the expected time for those completions to occur.
The significance of Gittins indices lies in the following theorem.

\begin{theorem}[{\citep[Corollary~3.5]{book_gittins}}]
  \label{thm:gittins}
  The Gittins policy,
  which always serves the job of greatest Gittins index,
  minimizes mean response time in any single-server system with
  Poisson job arrivals and Markovian job dynamics.
\end{theorem}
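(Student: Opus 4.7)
The plan is to adapt the classical Gittins index theorem to our queueing setting by viewing each job in the system as an independent arm of a (branching) multi-armed bandit. Under this reduction, serving a job advances its Markovian state, each Poisson arrival spawns a new arm, and each job completion removes an arm. By Little's law, mean response time is proportional to the expected cumulative number of jobs in the system, which is an additive cost that decomposes across arms.

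The first step is to reinterpret the SJP framework as a ``fair charge'' pricing mechanism. By \fref{def:fair}, $\fair{\J}$ is the reward at which serving becomes just barely worthwhile, so $\gittins{\J} = 1/\fair{\J}$ quantifies the maximum completion-per-unit-service rate of job~$\J$. The second step is a standard interchange argument. Given any policy that at some moment serves a job~$\J_1$ while a job~$\J_2$ with $\gittins{\J_1} < \gittins{\J_2}$ is available, I would construct a coupling in which the two service epochs are swapped up to the SJP-optimal stopping policy for~$\J_2$. Because distinct jobs evolve independently and Markovianly, the swap is local: cost outside the swap window is unchanged, and the Gittins-index ordering guarantees that cost inside the window is weakly reduced. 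Iterating shows every decision must favor the job of maximal Gittins index.

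The main obstacle is the infinite horizon combined with Poisson arrivals, which places us outside the finite-arm discounted MAB setting where the Gittins theorem is most cleanly proven. Two standard routes handle this: (a) invoke the branching-bandit generalization, in which arrivals are formally encoded as offspring arms, and appeal to the corresponding Gittins optimality result; or (b) perform the interchange directly on sample paths, using PASTA and the strong Markov property to ensure that arrivals occurring after the swap window are statistically identical under both policies. A further subtlety is ensuring that the supremum in \fref{def:sjp} is attained, so that the interchange construction yields a well-defined policy; the convexity guaranteed by \fref{lem:sjp_properties} essentially resolves this.
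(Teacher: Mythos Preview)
The paper does not prove \fref{thm:gittins}; it is stated with a citation to \citep[Corollary~3.5]{book_gittins} and used as a black box. There is therefore no proof in the paper to compare your proposal against.

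That said, your sketch is a reasonable outline of the standard arguments found in the cited literature. The two routes you identify---(a) encoding Poisson arrivals as offspring in a branching bandit and invoking the branching-bandit Gittins theorem, and (b) a direct sample-path interchange---are precisely the approaches used by \citet{tax_undiscounted_whittle}, \citet{branching_bandits_bertsimas}, and \citet{book_gittins}. One point to tighten: the interchange argument you describe is more delicate than ``swap two service epochs,'' because in the undiscounted holding-cost setting one must swap entire \emph{stopping-time intervals} (not fixed-length epochs) and then carefully account for the fact that the two policies may see arrivals at different wall-clock times; the coupling works because Poisson arrivals are exchangeable, but this is where the real work lies. Your remark that \fref{lem:sjp_properties} ``essentially resolves'' attainment of the supremum is also too quick: convexity of $\profit{\J}{}$ in $r$ says nothing about whether the supremum over stopping policies $\pi$ in \fref{def:sjp} is achieved, and for continuous state spaces one generally needs measurable-selection or approximation arguments. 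If you intend to write out a self-contained proof rather than cite the result, these are the gaps to fill.
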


The Gittins index theorem applies to our multistage M/G/1,
but directly computing the Gittins indices of multistage job
using \fref{def:gittins}
requires optimizing the multidimensional parameter~$\pi$.
Fortunately,
we can obtain a multistage job's Gittins index from SJP
as $\gittins{\J} = 1/\fair{\J}$.
This helps because, as we show in \fref{sub:single-stage},
SJP with a multistage job reduces to
several SJP instances with single-stage jobs.

Where does $\gittins{\J} = 1/\fair{\J}$ come from?
Consider SJP with job~$\J$ and reward~$r$ and ask:
is it optimal to give up immediately?
We can answer in two ways:
\begin{itemize}
\item
  \emph{Using the SJP index:}
  By \fref{def:fair},
  it is optimal to give up immediately if $r \leq \fair{\J}$.
\item
  \emph{Using the Gittins index:}
  Recall that we can think of $\gittins{\J}$ as
  the maximum possible ``completion rate'' of~$\J$.
  This means that with the right stopping policy,
  we can receive reward at average rate $r\gittins{\J}$
  while incurring cost at rate~$1$,
  so it is optimal to give up immediately if $r\gittins{\J} \leq 1$.
\end{itemize}
The following proposition formalizes this argument.

\begin{proposition}
  \label{prop:gittins_fair}
  For any job~$\J$,
  \begin{equation*}
    \fair{\J} = \frac{1}{\gittins{\J}}.
  \end{equation*}
\end{proposition}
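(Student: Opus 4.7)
The plan is to argue directly from the definitions by rewriting the condition $\profit{\J}{r} = 0$ in terms of a ratio. Write $p(\pi) = \P{\text{$\pi$ completes~$\J$}}$ and $c(\pi) = \E{\text{time $\pi$ serves~$\J$}}$, so that the objective in \fref{def:sjp} is $r\, p(\pi) - c(\pi)$ and the objective in \fref{def:gittins} is $p(\pi)/c(\pi)$. The key observation is that for any stopping policy~$\pi$ that does not immediately give up (so $c(\pi) > 0$),
\begin{equation*}
  r\, p(\pi) - c(\pi) > 0 \iff \frac{p(\pi)}{c(\pi)} > \frac{1}{r}.
\end{equation*}
The policy of immediately giving up contributes $0$ to both suprema and thus does not affect whether $\profit{\J}{r}$ strictly exceeds $0$.

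First I would handle the ``easy'' direction: if $r > 1/\gittins{\J}$, then $1/r < \gittins{\J}$, so by \fref{def:gittins} there exists a stopping policy~$\pi$ with $p(\pi)/c(\pi) > 1/r$, which gives $\profit{\J}{r} \geq r\, p(\pi) - c(\pi) > 0$. Hence every $r$ with $\profit{\J}{r} = 0$ must satisfy $r \leq 1/\gittins{\J}$, so $\fair{\J} \leq 1/\gittins{\J}$.

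Next I would prove the reverse inequality: if $r < 1/\gittins{\J}$, then for every non-trivial stopping policy~$\pi$ we have $p(\pi)/c(\pi) \leq \gittins{\J} < 1/r$, so $r\, p(\pi) - c(\pi) < 0$. Since the trivial ``give up immediately'' policy yields profit exactly~$0$, we get $\profit{\J}{r} = 0$. Taking the supremum over such $r$ via \fref{def:fair} gives $\fair{\J} \geq 1/\gittins{\J}$, completing the proof.

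The two edge cases to sanity-check are $\gittins{\J} = 0$ and $\gittins{\J} = \infty$, which should correspond to $\fair{\J} = \infty$ and $\fair{\J} = 0$ respectively. In the first case every non-trivial policy has $p(\pi) = 0$, so $\profit{\J}{r} = 0$ for all $r \geq 0$ and $\fair{\J} = \infty = 1/0$. In the second case, for any $r > 0$ we can find $\pi$ with $p(\pi)/c(\pi) > 1/r$, so $\profit{\J}{r} > 0$ and hence $\fair{\J} = 0 = 1/\infty$. There is no real obstacle here; the main thing to be careful about is excluding the trivial give-up-immediately policy correctly when passing between the two suprema, which matches exactly the restriction stated in \fref{def:gittins}.
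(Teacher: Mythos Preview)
Your proof is correct and follows essentially the same approach as the paper's: both hinge on the equivalence $r\,p(\pi) - c(\pi) > 0 \iff p(\pi)/c(\pi) > 1/r$ for non-trivial~$\pi$. The paper packages this slightly more compactly by showing that $\fair{\J}$ and $1/\gittins{\J}$ are infima of the \emph{same} set $\{r \geq 0 : \exists\,\pi,\ r\,p(\pi) > c(\pi)\}$, whereas you split into two inequalities plus edge cases, but the underlying argument is identical.
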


\begin{proof}
  Throughout, $\pi$ ranges over
  stopping policies that do not give up immediately.
  Continuity ensures the supremum in \fref{def:sjp}
  is unchanged by omitting the policy of giving up immediately.
  Let
  \begin{align*}
    P_\pi &= \P{\text{$\pi$ completes~$\J$}} \\
    E_\pi &= \E{\text{time $\pi$ serves~$\J$}}.
  \end{align*}
  We compute
  \iftoggle{widecol}{\begin{equation*}}{\begin{align*}}
    \fair{\J}
    \iftoggle{widecol}{}{&}
    = \inf\{r \geq 0 \mid \profit{\J}{r} > 0\}
    \iftoggle{widecol}{}{\\ &}
    = \inf\{r \geq 0 \mid \sup_\pi (rP_\pi - E_\pi) > 0\}
  \iftoggle{widecol}{\end{equation*}}{\end{align*}}
  and
  \begin{equation*}
    \frac{1}{\gittins{\J}}
    = \inf_\pi \frac{E_\pi}{P_\pi}
    = \inf\biggl\{r \geq 0 \biggm|
      r > \inf_\pi \frac{E_\pi}{P_\pi}\biggr\}.
  \end{equation*}
  Each of $\sup_\pi (rP_\pi - E_\pi) > 0$
  and $r > \inf_\pi E_\pi/P_\pi$
  holds if and only if there exists a stopping policy~$\pi$
  such that $rP_\pi > E_\pi$,
  so $\fair{\J}$ and $1/\gittins{\J}$ are infima of the same set.
\end{proof}

By \fref{prop:gittins_fair},
scheduling the job of greatest Gittins index is equivalent to
scheduling the job of least SJP index.
Hereafter, we work mostly in terms of SJP functions and SJP indices.

\subsection{Prior Gittins Index Definitions}
\label{sub:prior_gittins}

The Gittins index was originally studied in the context of
the multi-armed bandit problem,
for which it was developed by the eponymous Gittins
\citep{original_gittins, review_gittins},
but the same technique has since been applied to a host of other
stochastic optimization problems.
See \citet{book_gittins} for a recent overview.
The majority of works on the Gittins index
consider it in a \emph{discounted} setting,
where cost incurred or reward received at time~$t$
is scaled by a factor of $e^{-\alpha t}$
for some discount rate $\alpha \geq 0$.

However, minimizing mean response time in a queueing system
is a problem with \emph{undiscounted} costs.
Works that consider the Gittins index in an undiscounted setting
take one of two approaches.
The first approach is
to introduce discounting and consider the $\alpha \to 0$ limit.
\Citet{tax_undiscounted_whittle} takes this approach to
analyze the mean response time of the Gittins policy
in a model where each stage of each job has exponential size distribution.
The second approach is to avoid discounting altogether.
\Citet{golf_dumitriu} take this approach to solve a discrete-time problem
which, roughly speaking,
can be thought of as minimizing the completion time of
the first job to exit the system.

Broadly speaking, we take the approach of avoiding discounting altogether.
SJP is the continuous analogue
of the ``token vs. terminator game'' used by \citet{golf_dumitriu},
which in turn was an undiscounted analogue of a similar construction
introduced by \citet{gittins_index_whittle}.
Our contributions are
connecting SJP to
the traditionally defined Gittins index (\fref{prop:gittins_fair}),
showing that SJP admits a natural composition property
(\fref{thm:composition}), and
using SJP to analyze mean response time (\fref{thm:queueing_time}).
The resulting mean response time formula
is similar to that of \citet{tax_undiscounted_whittle}
but generalized to handle arbitrary stage size distributions.

\section{Single-Job Profit Composition Law}
\label{sec:composition}

The main benefit of SJP over other Gittins index formulations
is that it admits an elegant \emph{composition law} (\fref{thm:composition}).
The composition law says that
if a multistage job can written as a \emph{sequential composition}
of two jobs $\J$ and~$\K$ (\fref{def:sequential}),
then the SJP function of the job can be written in terms of
$\profit{\J}{}$ and~$\profit{\K}{}$.

\begin{definition}
  \label{def:sequential}
  The \emph{sequential composition} of jobs $\J$ and~$\K$,
  denoted $\J \seq \K$,
  is the job obtained by ``stitching together'' $\J$ and $\K$:
  the initial stage is $\J$'s initial stage,
  and all transitions to $\J$'s final stage go to $\K$'s initial stage instead.
  See \fref{fig:sequential_composition} for an illustration.

  \begin{figure}
    \centering
    \begin{tikzpicture}[scale=0.8, thick]
      \tikzstyle{EdgeStyle}=[->, >=stealth']
      \node at (-3, 1.1) {$\J$};
      \node[draw, rounded corners=3pt, inner sep=8]
        at (-3, 0) {\small$\J$};
      \node at (0, 1.1) {$\K$};
      \node[draw, diamond, rounded corners=3pt, inner sep=5]
        at (0, 0) {\small$\K$};
      \node at (3, 1.1) {$J \seq \K$};
      \node[draw, rounded corners=3pt, inner sep=8] (J)
        at (3, 0) {\small$\J$};
      \node[draw, diamond, rounded corners=3pt, inner sep=5] (K)
        at (3, -1.6) {\small$\K$};
      \path (K.north) +(0, -2pt) coordinate (Kn);
      \Edges(J.south,Kn);
    \end{tikzpicture}

    \nocaptiondetail

    \caption{Sequential Composition (\fref{def:sequential})}
    \label{fig:sequential_composition}
  \end{figure}
\end{definition}

\Fref{def:sequential} is the first of several ways of constructing jobs,
summarized in \fref{tab:construction_notation},
which we define in this section.
We give a more formal version of it in \fref{app:definitions}.

\begin{table}
  \caption{Summary of Job Construction Notation}
  \label{tab:construction_notation}

  \centering
  \squish\begin{tabular}{@{}lll@{}}
    \toprule
    \emph{Notation} & \emph{Description} & \emph{Defined in} \\
    \midrule
    $\J \seq \K$
      & sequential composition
      & \fref{def:sequential} \\
    $\single{i}$
      & single-stage job
      & \fref{def:single-stage} \\
    $\{q_\ell : \J_\ell\}_{\ell \in \mc{L}}$ or $\J_{\mc{L}}$
      & mixture composition
      & \fref{def:mixture} \\
    \bottomrule
  \end{tabular}\squish
\end{table}

Remarkably, we can easily express the SJP function of a sequential composition
in terms of the SJP functions of its components.

\begin{theorem}[Composition Law]
  \label{thm:composition}
  For any jobs $\J$ and~$\K$ and any reward~$r$,
  \begin{equation*}
    \profit{\J \seq \K}{r} = \profit{\J}{\profit{\K}{r}}.
  \end{equation*}
  That is, $\profit{\J \seq \K}{} = \profit{\J}{} \circ \profit{\K}{}\esub$.
\end{theorem}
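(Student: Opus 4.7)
The plan is to prove the composition law by a two-stage dynamic-programming decomposition. The key structural fact is that, by the construction of $\J \seq \K$, any stopping policy $\pi$ on the composed job factors as a pair $(\pi_\J, \pi_\K)$: the first component specifies stopping ages on $\J$'s (non-final) stages, and the second specifies stopping ages on $\K$'s stages, which are reached only on sample paths where the $\J$ phase completes without being stopped. I would first exhibit this decomposition explicitly, then compute the expected profit in closed form, and finally take the supremum over $\pi_\K$ and $\pi_\J$ in turn.

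For the computation, write $P_\J = \P{\pi_\J \text{ completes } \J}$ and $E_\J = \E{\text{time } \pi_\J \text{ serves } \J}$ for $\pi_\J$ acting on a standalone $\J$, and define $P_\K, E_\K$ analogously for $\pi_\K$ on a standalone $\K$. By the Markov property at the $\J \to \K$ transition built into \fref{def:sequential}, the subsequent evolution in $\K$ is stochastically independent of the $\J$ phase and begins in $\K$'s initial state. Consequently, the composed policy's completion probability factors as $P_\J P_\K$ and its expected served time is $E_\J + P_\J E_\K$, so its expected profit with reward $r$ equals
\begin{equation*}
r P_\J P_\K - E_\J - P_\J E_\K
\;=\; P_\J(r P_\K - E_\K) - E_\J.
\end{equation*}

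Taking the supremum then proceeds in two passes. With $\pi_\J$ held fixed, $\sup_{\pi_\K}(r P_\K - E_\K) = \profit{\K}{r}$ by \fref{def:sjp}; the boundary case $P_\J = 0$ is harmless because the coefficient of that supremum vanishes and $\profit{\K}{r} \geq 0$ by \fref{lem:sjp_properties}. This yields an inner value of $P_\J \cdot \profit{\K}{r} - E_\J$. Then $\sup_{\pi_\J}(P_\J \cdot \profit{\K}{r} - E_\J) = \profit{\J}{\profit{\K}{r}}$, directly from \fref{def:sjp} applied to $\J$ with constant reward $\profit{\K}{r}$, which is the claim. I expect the main obstacle to be justifying the policy decomposition and the factorization of probabilities and expectations fully rigorously: this is morally the strong Markov property at the $\J \to \K$ transition, but because $\J \seq \K$'s stage space is a disjoint union with rerouted transitions, I would invoke the formal construction from the appendix to confirm that no history from the $\J$ phase leaks into $\K$'s initial state.
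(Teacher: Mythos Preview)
Your proposal is correct and takes essentially the same approach as the paper: both decompose a stopping policy for $\J \seq \K$ into a pair $(\pi_\J, \pi_\K)$, use the factorizations $P_{\J \seq \K} = P_\J P_\K$ and $E_{\J \seq \K} = E_\J + P_\J E_\K$, and then take the supremum over $\pi_\K$ followed by $\pi_\J$. Your added remarks on the $P_\J = 0$ boundary case and on invoking the Markov property at the $\J \to \K$ transition are sound and go slightly beyond what the paper makes explicit, but the underlying argument is identical.
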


\begin{proof}
  A stopping policy~$\pi_{\J \seq \K}$ for job~$\J \seq \K$ is a vector
  containing a stopping age $\pi_{\J \seq \K}(i)$ for each stage
  $i \in \stages[\J \seq \K]$.
  Each stage~$i$ is in exactly one of $\stages[\J]$ and $\stages[\K]$,
  so we can think of $\pi_{\J \seq \K}$ as a pair vectors:
  one a stopping policy~$\pi_\J$ for~$\J$,
  and the other a stopping policy~$\pi_\K$ for~$\K$.
  Let
  \begin{align*}
    P_\J &= \P{\text{$\pi_\J$ completes~$\J$}} \\
    E_\J &= \E{\text{time $\pi_\J$ serves~$\J$}}
  \end{align*}
  and similarly for $\K$ and~$\J \seq \K$.
  Thinking of $\J \seq \K$ as $\J$ ``followed by''~$\K$, we have
  $P_{\J \seq \K} = P_\J P_\K$ and $E_{\J \seq \K} = E_\J + P_\J E_\K$,
  from which we compute
  \begin{align*}
    \profit{\J \seq \K}{r}
    &= \sup_{\mathclap{\pi_{\J \seq \K}}} (rP_{\J \seq \K} - E_{\J \seq \K}) \\
    &= \sup_{\mathclap{\pi_\J, \pi_\K}} (rP_\J P_\K - (E_\J + P_\J E_\K)) \\
    &= \sup_{\pi_\J} (\sup_{\pi_\K}(rP_\K - E_\K)P_\J - E_\J) \\
    &= \profit{\J}{\profit{\K}{r}}.
    \qedhere
  \end{align*}
\end{proof}

\Fref{thm:composition} shows that SJP functions naturally compose.
We can also use it to see that SJP indices or Gittins indices alone
do \emph{not} naturally compose.
To see this, observe that
\begin{equation*}
  \fair{\J \seq \K}
  = \profitinv{\J \seq \K}{0}
  = \profitinv{\K}{\profitinv{\J}{0}}
  = \profitinv{\K}{\fair{\J}}.
\end{equation*}
Consider the simple case where
$\J$ is a single-stage job of deterministic size~$d$.
Then $\fair{\J \seq \K} = \profitinv{\K}{d}$,
so to handle arbitrary~$d$,
we need to know $\K$'s entire SJP \emph{function}, not just its SJP index.

\subsection{Reduction to Single-Stage Jobs}
\label{sub:single-stage}

It may seem at first that \fref{thm:composition} has a limited scope,
because it only applies to jobs that are sequential compositions.
Fortunately, it is possible to decompose \emph{any} multistage job~$\J$ into
a sequential composition of two components,
one of which is a single-stage job with size distribution~$\size{\init[\J]}$.
By applying this decomposition recursively,
we can obtain the SJP function of any multistage job
in terms of single-stage SJP functions.
To express the decomposition, we need the following definitions,
which we formalize in \fref{app:definitions}.

\begin{definition}
  \label{def:single-stage}
  The \emph{single-stage job} with stage~$i$, denoted~$\single{i}$,
  is the job with just one stage~$i$ of size~$\size{i}$.
\end{definition}

\begin{definition}
  \label{def:mixture}
  Let $\mc{L}$ be a finite set.
  The \emph{mixture composition} of jobs $\{\J_\ell\}_{\ell \in \mc{L}}$
  with probabilities $\{q_\ell\}_{\ell \in \mc{L}}$
  is the job representing a randomly chosen job,
  choosing $\J_\ell$ with probability~$q_\ell$.
  See \fref{fig:mixture_composition} for an illustration.
  We denote the mixture composition by
  $\{\J_\ell \text{ w.p. } q_\ell\}_{\ell \in \mc{L}}$,
  abbreviating to $\J_{\mc{L}}$
  when the probabilities are unambiguous.

  \begin{figure}
    \centering
    \begin{tikzpicture}[scale=0.8, thick]
      \SetUpVertex[Math,LabelOut]
      \tikzstyle{EdgeStyle}=[->, >=stealth']
      \tikzstyle{VertexStyle}=[shape=circle, minimum width=1em, draw]
      \newcommand{\leftlbl}{left, fill=none, shift={(0,0.1)}}
      \newcommand{\rightlbl}{right, fill=none, shift={(0,0.1)}}
      \node at (-3, 1.1) {$\J$};
      \node[draw, rounded corners=3pt, inner sep=8]
        at (-3, 0) {\small$\J$};
      \node at (0, 1.1) {$\K$};
      \node[draw, diamond, rounded corners=3pt, inner sep=5]
        at (0, 0) {\small$\K$};
      \node at (3, 1.1) {$\{p : J, 1 - p : \K\}$};
      \Vertex[x=3, y=0.4, L={0}]{A}
      \node[draw, rounded corners=3pt, inner sep=8] (J)
        at (2.25, -1) {\small$\J$};
      \node[draw, diamond, rounded corners=3pt, inner sep=5] (K)
        at (3.75, -1) {\small$\K$};
      \path (K.north) +(0, -2pt) coordinate (Kn);
      \Edge[label=$p$, labelstyle={\leftlbl}, style={pos=0.5425}](A)(J.north);
      \Edge[label=$1 - p$, labelstyle={\rightlbl}, style={pos=0.6}](A)(Kn);
    \end{tikzpicture}

    \nocaptiondetail

    \caption{Mixture Composition (\fref{def:mixture})}
    \label{fig:mixture_composition}
  \end{figure}
\end{definition}

We can express any multistage job as a sequential composition
of a single-stage job,
namely its first stage,
and a mixture composition,
namely a mixture of the possible state-conditioned jobs that might arise
after the first stage transition.
We can thus write
\begin{equation}
  \label{eq:decomposition}
  \J
  = \single{\init[\J]} \seq \{\st[\J]{i}{0}
      \text{ w.p. } \trans[\J]{\init[\J]}{i}\}_{i \in \stages[\J]}.
\end{equation}
It is simple to check that the optimal profit of a mixture composition
is the expected optimal profit of the randomly chosen job:
\begin{equation}
\label{eq:profit_mixture}
  \profit{\J_{\mc{L}}}{r}
  = \sum_{\mathclap{\ell \in \mc{L}}} q_\ell\profit{\J_\ell}{r}.
\end{equation}
Combining \fref{eq:decomposition} and \fref{eq:profit_mixture}
with \fref{thm:composition}
and adopting the convention $\profit{\st[\J]{\fin[\J]}{0}}{r} = r$
yields the following.

\begin{corollary}
  \label{cor:decomposition}
  For any job~$\J$ and any reward~$r$,
  \begin{equation*}
    \profit{\J}{r}
    = \profit{\single{\init[\J]}}{}\Bigl(
        \sum_{\mathclap{i \in \stages[\J]}}
          \trans[\J]{\init[\J]}{i} \profit{\st[\J]{i}{0}}{r}
      \Bigr).
  \end{equation*}
\end{corollary}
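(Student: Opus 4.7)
The plan is to combine three ingredients that are already on the table: the structural identity \fref{eq:decomposition}, the mixture-profit formula \fref{eq:profit_mixture}, and the composition law \fref{thm:composition}. Once these are in place the proof is a one-line substitution, so the proposal is mostly about verifying that the pieces fit together and that the absorbing-state convention is consistent.

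First I would invoke \fref{eq:decomposition} to rewrite
\begin{equation*}
  \J = \single{\init[\J]} \seq \J_{\stages[\J]},
\end{equation*}
where $\J_{\stages[\J]}$ is the mixture composition that picks $\st[\J]{i}{0}$ with probability $\trans[\J]{\init[\J]}{i}$. This identity is just a restatement of the dynamics of a type-$\J$ job: it begins in stage $\init[\J]$, and upon exiting that stage it lands in state $\st{i}{0}$ with probability $\trans[\J]{\init[\J]}{i}$, at which point its remaining evolution is by definition that of a fresh type-$\st[\J]{i}{0}$ job.

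Next I would apply the composition law to get $\profit{\J}{r} = \profit{\single{\init[\J]}}{\profit{\J_{\stages[\J]}}{r}}$, and then expand the inner expression using \fref{eq:profit_mixture} to obtain
\begin{equation*}
  \profit{\J_{\stages[\J]}}{r} = \sum_{i \in \stages[\J]} \trans[\J]{\init[\J]}{i}\,\profit{\st[\J]{i}{0}}{r}.
\end{equation*}
Substituting this into the previous display gives exactly the claimed equality.

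The only bookkeeping point is that $\stages[\J]$ contains the final stage $\fin[\J]$, corresponding to the branch in which stage $\init[\J]$ transitions directly to completion. For this branch the summand should equal $r$, which is precisely the convention $\profit{\st[\J]{\fin[\J]}{0}}{r} = r$ stated just before the corollary; it is consistent because a type-$\st[\J]{\fin[\J]}{0}$ ``job'' has already completed, so the reward $r$ is collected with probability one at zero expected service cost. No real obstacle arises beyond this: the composition law is doing all the work, and the corollary is just its projection onto the first-stage decomposition of $\J$.
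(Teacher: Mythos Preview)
Your proposal is correct and matches the paper's own argument exactly: the paper states just before the corollary that it follows by ``combining \fref{eq:decomposition} and \fref{eq:profit_mixture} with \fref{thm:composition} and adopting the convention $\profit{\st[\J]{\fin[\J]}{0}}{r} = r$,'' which is precisely the chain of substitutions you carry out.
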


By recursively applying \fref{cor:decomposition},
we can express the SJP function of a multistage job
in terms of single-stage SJP functions,
as shown in \fref{alg:gittins}.
This reduces SJP for a multistage job,
which is a multidimensional optimization problem,
to several single-stage SJP instances,
each of which is a single-dimensional optimization problem.
Thanks to the convenient properties proved in \fref{lem:sjp_properties},
a simple numerical technique such as bisection
suffices to find a job's SJP index given its SJP function.

\begin{algorithm}
  \caption{Gittins Index of a Multistage Job}
  \label{alg:gittins}
  \justifying\noindent
  \textsc{Input:} Multistage job~$\J$. \\
  \textsc{Output:} Gittins index~$\gittins{\J}$.
  \begin{enumerate}[(i)]
  \item
    Define $\profit{\st[\J]{\fin[\J]}{0}}{r} = r$.
  \item
    For every $i \in \stages[\J] \setminus \{\fin[\J]\}$,
    define $\profit{\st[\J]{i}{0}}{}$
    using \fref{cor:decomposition}:
    \begin{equation*}
      \profit{\st[\J]{i}{0}}{r}
      = \profit{\single{i}}{}\Bigl(
          \sum_{\mathclap{i \in \stages[\J]}}
            \trans[\J]{\init[\J]}{i} \profit{\st[\J]{i}{0}}{r}
        \Bigr).
    \end{equation*}
    These functions are well-defined because stage transitions are acyclic,
    so there is no self-reference in the non-zero terms.
    The SJP function of~$\J$ is
    $\profit{\J}{} = \profit{\st[\J]{\init[\J]}{0}}{}$.
  \item
    Use bisection or similar
    to compute the largest~$r$ such that $\profit{\J}{r} = 0$.
    This value is $\fair{\J} = 1/\gittins{\J}$.
  \end{enumerate}
\end{algorithm}

\Fref{alg:gittins} tells us that
to compute the SJP index of a multistage job,
it suffices to compute the SJP function of several single-stage jobs.
By \fref{def:fair},
to compute the SJP function of a single-stage job,
it suffices to find, for each reward~$r$,
the earliest age~$x(r)$ at which
the SJP index first exceeds~$r$.
We can obtain~$x(r)$
from prior work on the Gittins index in the M/G/1 setting
\citep{m/g/1_gittins_aalto, mlps_gittins_aalto}.

\subsection{Discrete Stage Size Distributions}
\label{sub:discrete}

It is natural to ask what the computational complexity of \fref{alg:gittins} is
and how that complexity compares to prior approaches.
When stages have continuous distributions this is difficult to answer,
because to the best of our knowledge
the complexity of computing the Gittins index
for even a single-stage job with continuous size distribution
has not been previously studied.
Therefore, to demonstrate the impact of the composition law
on computational complexity,
we consider jobs with discrete stage size distributions.

Let us consider the special case of a multistage job~$\J$
with stage size distributions of finite support
and a deterministic stage sequence,
meaning we can write $\J = \single{i_1} \seq \dotsb \seq \single{i_m}$.
We can use \fref{thm:composition} to obtain the Gittins index of~$\J$
in $O(n \log n)$ time.
Here $n = |\J| = \sum_{i \in \stages[J]} |\size{i}|$
is the size of $\J$'s state space,
where $|\size{i}|$ is the number of support points of stage~$i$.

The algorithm first computes the SJP function~$\profit{\single{i}}{}$
for each stage~$i$, then does the following:
\begin{enumerate}[(i)]
\item
  Find stage~$i$ such that $\J = \K \seq \single{i} \seq \L$
  with $|\K|, |\L| \leq n/2$.\footnote{%
    If the first or last stage has many support points,
    then $i$ may be the first or last stage,
    in which case we omit $\K$ or $\L$, respectively.
    Indeed, in the base case, $i$ is the only stage.}
\item
  Recursively compute $\profit{\K}{}$ and $\profit{\L}{}$.
\item
  Return
  $\profit{\J}{}
    = \profit{\K}{} \circ \profit{\single{i}}{} \circ \profit{\L}{}$.
\end{enumerate}
We show the following two facts in \fref{app:discrete}:
\begin{itemize}
\item
  We can compute $\profit{\single{i}}{}$ in $O(|X_i|)$ time.
\item
  We can compose
  $\profit{\K}{} \circ \profit{\single{i}}{} \circ \profit{\L}{}$
  in $O(|\K| + |X_i| + |\L|)$ time.
  This is because the SJP functions are piecewise linear,
  so we can represent them as linked lists such that
  function composition works much like the ``merge'' step of merge sort.
\end{itemize}
Therefore, the algorithm above takes $O(n \log n)$ time.

Our $O(n \log n)$ algorithm improves upon
the best algorithms in the literature, which take $O(n^3)$ time
\citep{gittins_index_computation_chakravorty}.
However, previous algorithms consider a more general problem
with arbitrary stage transitions.

\section{Response Time Analysis}
\label{sec:analysis}

In this section, we analyze the response time of the Gittins policy.
We actually focus on queueing time,
which is response time minus service time.
Recall from \fref{sub:conventions} that we assume all jobs have
the same job type~$\J$,
so we omit the $\J$ from the notation in \fref{tab:job_notation}.
In particular,
we write $\profit{\st{i}{x}}{r} = \profit{\st[\J]{i}{x}}{r}$
and $\totalsize{\st{i}{x}} = \totalsize{\st[\J]{i}{x}}$.

Our approach is very different from the tagged-job approach
that is often used to analyze response time
\citep{book_harchol-balter, book_kleinrock, soap_scully,
  fb_analysis_schrage, srpt_analysis_schrage}.
Instead, we analyze response time
by solving a \emph{continuous dynamic program}.
We find a \emph{cost function} that,
for every possible state of the system,
gives the \emph{expected total queueing time}
for the remainder of the busy period.
This total is the sum of queueing times
of all jobs that appear during the busy period,
including those currently in the system and those that arrive later.
From this cost function, mean queueing time follows easily.

We are not the first to use a dynamic programming approach
to analyze mean response time:
\citet{tax_undiscounted_whittle}
analyzes multistage jobs whose states have exponential size distributions,
and \citet{rs_slowdown_hyytia}
analyze jobs with known sizes.
We apply dynamic programming to a system
in which stages have \emph{unknown size} and \emph{general size distributions},
which requires new techniques.

Our main result is the following expression for
queueing time under the Gittins policy,
which is in terms of the SJP function and its derivative.

\begin{theorem}[Gittins Policy Queueing Time]
  \label{thm:queueing_time}
  The mean queueing time of multistage jobs
  under the Gittins policy is
  \begin{equation*}
    \E{\Tqueue}
    = \sum_{\mathclap{i \in \stages}}\mkern5mu
        \integral{0}{\infty}\mkern-1.5mu
          \integral{0}{\infty}
            \lambda q_{ai} \tail[i]{x}
            \frac{%
              \dd{r}\costB{r}{\st{i}{x}}\nudgel
              \dd{r}\costB{r}{\st{a}{0}}}{%
              \dd{r}\frozen{r}}
        \,dr
      \,dx,
  \end{equation*}
  where $q_{ai}\esub$ is the probability that
  a job at some point reaches stage~$i$ and
  \begin{align*}
    \load{r} &= \lambda(r - \profit{\st{a}{0}}{r}) \\
    \frozen{r} &= \frac{r}{1 - \load{r}} \\
    \costB{r}{\st{i}{x}} &= \frac{r - \profit{\st{i}{x}}{r}}{1 - \load{r}}.
  \end{align*}
\end{theorem}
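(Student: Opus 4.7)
The plan is to analyze the system via a \emph{continuous-time dynamic programming} argument, obtaining the queueing-time formula from a reward-parameterized family of cost functions rather than a tagged-job decomposition. Concretely, for every reward $r \geq 0$ I would consider a fictitious priced version of the multistage M/G/1 in which each unit of service incurs cost~$1$ and each job completion earns reward~$r$; by \fref{prop:gittins_fair} the Gittins policy is simultaneously optimal in every such system. The cost function $\costB{r}{\cdot}$ should then be interpreted as the expected cost accumulated over the remainder of a busy period restricted to the work whose SJP index is at most~$r$, and $\frozen{r}$ as the mean length of the corresponding ``frozen'' busy period initiated by a single unit of such work.

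First I would establish the closed forms $\load{r} = \lambda(r - \profit{\st{a}{0}}{r})$, $\frozen{r} = r/(1 - \load{r})$, and $\costB{r}{\st{i}{x}} = (r - \profit{\st{i}{x}}{r})/(1 - \load{r})$ from standard M/G/1 busy-period identities applied to the SJP-parameterized system. By \fref{def:sjp}, the quantity $r - \profit{\J}{r}$ is the expected \emph{cost} incurred under the SJP-optimal policy offered reward~$r$, so $\lambda(r - \profit{\st{a}{0}}{r})$ plays the role of a load. The freezing-time and cost formulas then emerge from the M/G/1 busy-period equation $\frozen{r} = r + \load{r}\frozen{r}$ together with the Markovian restart property at stage transitions, which ensures that cost accumulation under the Gittins policy respects the sequential structure exposed by \fref{thm:composition}.

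Second, I would relate these parameterized quantities to mean queueing time by differentiating in~$r$. The derivative $\partial_r \costB{r}{\st{i}{x}}$ measures the marginal cost-accumulation as the reward threshold is perturbed; by a level-crossing argument this marginal quantity identifies the expected time other jobs spend waiting behind a job in state~$\st{i}{x}$, per unit of marginal freezing time $\partial_r \frozen{r}$. Symmetrically, $\partial_r \costB{r}{\st{a}{0}}$ captures the marginal delay that a freshly arriving job inflicts on the system. Multiplying by the steady-state density $\lambda q_{ai} \tail[i]{x}$ of jobs at state~$\st{i}{x}$ (which follows from PASTA together with the age-density formula for Markovian stage progression), integrating over~$x$ and~$r$, and summing over~$i$ should then produce the claimed formula, with the normalization by $\partial_r \frozen{r}$ converting marginal cost into marginal time.

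The main obstacle will be justifying the differentiation and level-crossing step: showing that the bilinear form $\partial_r\costB{r}{\st{i}{x}} \cdot \partial_r\costB{r}{\st{a}{0}}/\partial_r\frozen{r}$ correctly splits the queueing-time contribution into a factor attributable to a job's own remaining wait and a factor attributable to the delay it imposes on later arrivals. \fref{lem:sjp_properties} supplies convexity and almost-everywhere differentiability of $\profit{\J}{r}$, which makes the pointwise derivatives meaningful, but verifying dominated-convergence conditions to interchange the reward integral with the cost dynamic program --- and in particular handling the countable set of~$r$ where $\profit{\st{i}{x}}{r}$ has corners --- will require care, especially since stages may have arbitrary, possibly heavy-tailed, size distributions.
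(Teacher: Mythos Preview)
Your proposal correctly identifies that the argument is dynamic programming rather than tagged-job, and your reading of $r - \profit{\st{i}{x}}{r}$ as a cost and of $\load{r}$, $\frozen{r}$, $\costB{r}{\cdot}$ as busy-period quantities is accurate. But the second half of your plan---the ``level-crossing argument'' that turns $\partial_r\costB{r}{\st{i}{x}}\cdot\partial_r\costB{r}{\st{a}{0}}/\partial_r\frozen{r}$ into a queueing-time density---is not a proof mechanism; it is a restatement of what the formula looks like. Nothing in your outline explains \emph{why} the bilinear form integrates to mean queueing time, and the phrases ``marginal delay a fresh arrival inflicts'' and ``expected time other jobs spend waiting behind'' are tagged-job language, which you said you were avoiding.

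The paper supplies two ingredients you are missing. First, it introduces a fictitious \emph{bypass} action (complete the job at exponential rate $1/r$) and formulates a hierarchy of busy-period optimization problems whose Hamilton--Jacobi--Bellman equations are explicit; the closed form for $\costB{r}{\cdot}$ is obtained by \emph{verifying} its HJB equation, not from generic busy-period identities. The bypass device is what ties SJP to busy-period length: BPOP~A with one job is literally SJP, so $\costA{r}{\st{i}{x}} = r - \profit{\st{i}{x}}{r}$. Second, and more importantly, the derivative $\partial_r\costA{r}{\st{i}{x}}$ (and its with-arrivals analogue $\partial_s\costC{s}{\st{i}{x}}$) is recognized as the \emph{tail function of a random variable}, the ``prevailing index'' $\prevailing{i}{x}$ (resp.\ $\freezing{i}{x}$). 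This converts the integrand into $\E{\min\{\freezing{i}{x},\freezing{a}{0}\}}$ and, via integration by parts, into $\E{\costC{\freezing{a}{0}}{\st{i}{x}}}$.

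The proof then proceeds by \emph{guess-and-verify}: one posits that the full queueing-time cost function $\costQ{\sts}$ decomposes as a sum of single-job terms plus pairwise ``interference'' terms $\cost[k\ell]{\sts} = \E{\min\{\freezing{i_k}{x_k},\freezing{i_\ell}{x_\ell}\}}$, and checks that this guess satisfies the HJB equation for $\costQ{}$. The verification works because each interference term, rewritten as $\E{\costC{\freezing{i_\ell}{x_\ell}}{\st{i_k}{x_k}}}$, inherits the already-verified HJB equation for $\costC{}$; equality (rather than mere inequality) for the job of minimal SJP index uses the support bound $\freezing{i}{x} \geq \frozen{\fair{\st{i}{x}}}$ a.s. Your outline has no analogue of this decomposition or of the verification step, and the dominated-convergence issues you flag are not where the difficulty lies.
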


To prove \fref{thm:queueing_time}, we
\begin{itemize}
\item
  write an optimality equation
  that characterizes the cost function of the dynamic program
  (\fref{sub:optimality});
\item
  as a warmup, solve a simplified dynamic program
  without arrivals (\fref{sub:warmup}); and
\item
  using intuition from the arrival-free case,
  guess a cost function for the full dynamic program with arrivals
  and verify it satisfies the optimality equation
  (\fref{sub:guess}).
\end{itemize}
It turns out that to understand the intuition behind
our guess for the cost function,
it helps to generalize the scheduling problem we consider (\fref{sub:bpop}).
We consider a setting where
there are not one but two possible actions available for each job:
\emph{serve} and \emph{bypass}.
Serving a job advances its state as usual,
and bypassing gives the job an alternate way to complete.
Specifically, while a job is being bypassed,
it stochastically jumps to state $\st{\fin}{0}$ at rate $1/r$,
where $r \geq 0$ is a parameter representing the \emph{expected bypass time}.

Scheduling with bypassing is simple to characterize for extreme values of~$r$.
\begin{itemize}
\item
  In the $r \to 0$ limit,
  we bypass all jobs and the total queueing time approaches~$0$.
\item
  In the $r \to \infty$ limit, we never bypass any jobs,
  so we recover ordinary scheduling without bypassing.
\end{itemize}
Ultimately, we want to understand total queueing time
in the $r \to \infty$ limit.
Given that we understand the $r \to 0$ limit,
it would suffice to compute the \emph{derivative with respect to~$r\esub$}
of total queueing time,
then integrate.
This is exactly what we do in \fref{sub:warmup} for the arrival-free case,
and a similar approach works when incorporating arrivals in \fref{sub:guess}.

The option to bypass a job is intimately related to giving up in SJP,
as hinted at by the common notation~$r$.
Specifically, SJP with reward~$r$ is equivalent to
minimizing the completion time of a single job with expected bypass time~$r$.
This is because we can think of $r$ as the \emph{opportunity cost}
of not completing the job in SJP.
Of course, the relationship between bypassing and SJP is less clear
in a context where new jobs might arrive during a bypass.
Finding this relationship
is one of the obstacles we overcome in \fref{sub:guess}
when adapting our arrival-free cost function from \fref{sub:warmup}
to the setting with arrivals.

\subsection{Busy Period Optimization Problems}
\label{sub:bpop}

In this section we formulate several optimization problems
as continuous dynamic programs.
All such problems involve minimizing a cost
over the course of a busy period,
so we call them \emph{busy period optimization problems} (BPOPs).
We will consider five BPOPs in total,
called \emph{BPOP~A}, \emph{BPOP~B}, \emph{BPOP~C}, \emph{BPOP~P},
and \emph{BPOP~Q}.

BPOPs P and~Q are \emph{total queueing time minimization} problems.
At any moment in time, there may be several jobs in the system,
and we must choose which job to serve.
Our goal is to minimize the expected total queueing time
of all jobs in the busy period.
These two BPOPs differ only in the arrival process.
\begin{itemize}
\item
  In BPOP~P, no arrivals occur.
\item
  In BPOP~Q, arrivals occur at rate~$\lambda$.
\end{itemize}
BPOP~Q is the problem we ultimately want to solve,
towards which BPOP~P is a helpful intermediate step.

BPOPs A, B, and~C are all \emph{busy period length minimization} problems.
While busy period length may not seem directly related to queueing time,
these three BPOPs are useful for solving BPOPs P and~Q.
At any moment in time in BPOPs A~, B, and~C,
we can choose to \emph{serve} any job or \emph{bypass} any job.
Our goal is to minimize the expected amount of time until the system is empty.
These three BPOPs differ only in the arrival process.
\begin{itemize}
\item
  In BPOP~A, no arrivals occur.
\item
  In BPOP~B, arrivals occur at rate~$\lambda$.
\item
  In BPOP~C, arrivals occur at rate~$\lambda$ while serving a job,
  but no arrivals occur while bypassing a job.
\end{itemize}
Note that BPOP~A with a single job is essentially equivalent to SJP.

To formulate a BPOP as a dynamic program, we must define its
state space, action space, state dynamics, and cost dynamics.

The \emph{state space} of a BPOP is the set of \emph{lists of job states}.
That is, when there are $n$ jobs in the system,
the \emph{system state}, or simply \emph{state} when not ambiguous, is a list
\begin{align*}
  \sts &= (\st{i_1}{x_1}, \dots, \st{i_n}{x_n}),
\end{align*}
where $\st{i_k}{x_k}$ is the state of job~$k$.
To keep the notation focused on the important components of system states,
we use ellipses to hide the unimportant parts.

The \emph{action space} of a BPOP
consists of up to two possible actions for each job~$k$:
\emph{bypass}, which is possible in BPOPs A, B, and~C;
and \emph{serve}, which is possible in all BPOPs.

The \emph{state dynamics} of a BPOP depend on the action taken
and vary according to details of the BPOP in question,
but they all have the following aspects in common:
\begin{itemize}
\item
  When bypassing or serving job~$k$,
  all other jobs remain in the same state.
\item
  When bypassing job~$k$,
  its state stochastically jumps from $\st{i_k}{x_k}$ to $\st{\fin}{0}$
  at rate~$1/r$,
  where $r \geq 0$ is a constant.
\item
  When serving job~$k$,
  its state evolves as described in \fref{sec:model}:
  its age~$x_k$ increases at rate~$1$,
  and its state jumps stochastically from $\st{i_k}{x_k}$ to $\st{j}{0}$
  at rate~$\trans{i_k}{j}\hazard{i_k}{x_k}$.
\item
  When arrivals occur,
  the system state jumps from $(\dots)$ to $(\dots, \st{a}{0})$
  at stochastic rate $\lambda$.
\end{itemize}
For example, while serving job~$k$ in BPOP~B, C, or~Q,
the system state evolves accounting for both $k$'s service and new arrivals.

The \emph{cost dynamics} of a BPOP are simple:
we pay cost continuously at state-dependent rate.
The cost rate is always~$1$ for BPOPs A, B, and~C;
and the cost rate is $n-1$, the number of jobs in the queue,
for BPOPs P and~Q.

The \emph{cost function} of a BPOP maps each system state~$\sts$
to the expected cost that will be paid \emph{under an optimal policy}
as the system advances from $\sts$ to the empty state~$()$.
We write $\costA{r}{}$ for the cost function of BPOP~A,
and similarly for $\costB{r}{}$, $\costC{r}{}$, $\costP{}$, and~$\costQ{}$.
Note that the cost functions for BPOPs A, B, and~C
are parametrized by the expected bypass time~$r$.
To refer to a generic cost function mapping system states to costs,
we write~$\cost{}$.
Every cost function~$\cost{}$ discussed in this section
satisfies the following conditions:
\begin{itemize}
\item
  The order of job states is arbitrary,
  so for all $1 \leq k < \ell \leq n$,
  \begin{equation*}
    \cost{\dots, \st{i_k}{x_k}, \dots, \st{i_\ell}{x_\ell}, \dots}
    = \cost{\dots, \st{i_\ell}{x_\ell}, \dots, \st{i_k}{x_k}, \dots}.
  \end{equation*}
\item
  Jobs depart upon entering stage~$\fin$,
  so $\cost{\dots, \st{\fin}{0}} = \cost{\dots}$.
\item
  We measure cost until the busy period ends,
  which is when the system is empty, so
  $\cost{{}} = 0$.
\end{itemize}
These serve as boundary conditions when characterizing a BPOP's cost function.

\subsection{Optimality Equations}
\label{sub:optimality}

We now turn our attention to characterizing
the cost functions of the five BPOPs.
Our ultimate goal in proving \fref{thm:queueing_time}
is to obtain an expression for mean queueing time~$\E{\Tqueue}$.
The BPOP most directly related to queueing time is BPOP~Q:
a simple argument using renewal theory yields
\begin{equation}
  \label{eq:renewal_reward}
  \E{\Tqueue} = (1 - \load{})\costQ{\st{a}{0}},
\end{equation}
where $\load{}$ is the system load
and $\costQ{}$ is the cost function of BPOP~Q.
This is because each busy period has
total expected queueing time $\costQ{\st{a}{0}}$
spread over $1/(1 - \load{})$ jobs in expectation.
Of course, to make use of this observation,
we must compute $\costQ{\st{a}{0}}$.
The statement of \fref{thm:queueing_time} previews the final answer,
which we see involves $\costB{r}{}$, the cost function of BPOP~B,
and the other BPOPs are helpful during the derivation.

We have given the
state space, action space, state dynamics, and cost dynamics
of BPOP~Q in \fref{sub:bpop}.
We can use this to formulate BPOP~Q as a dynamic program
using standard techniques,
specifically those for piecewise-deterministic Markov processes
\citep{piecewise-deterministic_davis, piecewise-deterministic_vermes}.
From this formulation
one can obtain an \emph{optimality equation},
specifically a Hamilton-Jacobi-Bellman equation
\citep{piecewise-deterministic_vermes},
which characterizes~$\costQ{\sts}$.
For brevity,
we present the optimality equation for BPOP~Q and the other BPOPs directly.

The optimality equation for any BPOP
says that for every system state and admissible action in that state,
the following quantities should add to at least~$0$:
\begin{enumerate}[(i)]
\item
  \label{item:cost_incurred}
  the expected rate cost is incurred according to the cost dynamics and
\item
  \label{item:cost_function_change}
  the expected rate at which the cost function changes
  as the state evolves according to the state dynamics.
\end{enumerate}
Furthermore, the quantities must add to \emph{exactly}~$0$
when the \emph{optimal} action is taken.
Quantity~\fref{item:cost_incurred}
is simply $1$ or $n - 1$, depending on the BPOP.
We express quantity~\fref{item:cost_function_change}
in terms of three operators on cost functions,
which give the rate at which a cost function changes
due to various components of the state dynamics.

\begin{definition}
  \label{def:operators}
  The \emph{serve-$k$ operator}, denoted~$\serve{k}$,
  gives the expected rate at which a cost function~$\cost{}$ changes
  due to serving job~$k$:
  \begin{align*}
      \iftoggle{widecol}{}{&}
      \serve{k} \cost{\dots, \st{i_k}{x_k}, \dots}
      \iftoggle{widecol}{&}{\\ &\qquad}
        = \dd{x_k} \cost{\dots, \st{i_k}{x_k}, \dots} \\
      &\iftoggle{widecol}{\quad}{\qqquad}
        + \hazard{i_k}{x_k}\Bigl(
        \sum_{\mathclap{j \in \stages}}
          \trans{i}{j}\cost{\dots, \st{j}{0}, \dots}
        - \cost{\dots, \st{i_k}{x_k}, \dots}\Bigr).
  \end{align*}
  The \emph{arrive operator}, denoted~$\arrive$,
  gives the expected rate at which a cost function~$\cost{}$ changes
  due to new jobs arriving:
  \begin{equation*}
    \arrive \cost{\dots} = \lambda(\cost{\dots, \st{a}{0}} - \cost{\dots}).
  \end{equation*}
  The \emph{bypass-$k$ operator}, denoted~$\bypass{r}{k}$,
  gives the expected rate at which a cost function~$\cost{}$ changes
  due to bypassing job~$k$:
  \iftoggle{widecol}{\begin{equation*}}{\begin{align*}}
    \bypass{r}{k} \cost{\dots, \st{i_k}{x_k}, \dots}
    \iftoggle{widecol}{}{&}
    = \frac{1}{r}(\cost{\dots, \st{\fin}{0}, \dots}
    \iftoggle{widecol}{}{\\ &\qquad}
      - \cost{\dots, \st{i_k}{x_k}, \dots}).
  \iftoggle{widecol}{\end{equation*}}{\end{align*}}
\end{definition}

To be concrete, consider BPOP~P in state~$\sts$ with $n \geq 1$ jobs.
At every moment in time, we incur cost at rate $n - 1$.
The rate at which the cost function changes is given by
the serve-$k$ operator~$\serve{k}$,
depending on which job~$k$ we serve.
No matter which job we serve,
the cost function should change at expected rate
greater than or equal to $-(n - 1)$,
achieving equality when serving the optimal job.
Thus, the optimality equation for BPOP~P is
\begin{equation}
  \label{eq:opt_p}
  \refstepcounter{equation}
  \tag{\theequation/P*}
  \min_k \serve{k}\costP{\sts} = -(n - 1).
\end{equation}
Above, and in all future optimality equations,
the minimum is understood as taken over $1 \leq k \leq n$.
The optimality equation for BPOP~Q is similar,
except now arrivals contribute to the rate at which
the cost function changes:
\begin{equation}
  \label{eq:opt_q}
  \refstepcounter{equation}
  \tag{\theequation/Q*}
  \min_k{}(\serve{k} + \arrive)\costQ{\sts} = -(n - 1).
\end{equation}

The optimality equations for BPOPs A, B, and~C
are similar to those for BPOPs P and~Q with two main differences.
First, they incur cost at rate~$1$, so the right-hand side is $-1$.
Second, we have the option to bypass jobs instead of serving them,
manifesting as a nested minimization,
first deciding which job to act on and then deciding
whether to serve or bypass it.
For readability, we write the inner minimization vertically.
\begin{itemize}
\item
  BPOP~A has no arrivals, so its optimality equation is
  \begin{equation}
    \label{eq:opt_a}
    \refstepcounter{equation}
    \tag{\theequation/A*}
    \min_k \min\biggl\{
    \begin{matrix}
      \serve{k}\costA{r}{\sts} \\
      \bypass{r}{k}\costA{r}{\sts}
    \end{matrix}
    \biggr\} = -1.
  \end{equation}
\item
  BPOP~B always has arrivals, so its optimality equation is
  \begin{equation}
    \label{eq:opt_b}
    \refstepcounter{equation}
    \tag{\theequation/B*}
    \min_k \min\biggl\{
    \begin{matrix}
      (\serve{k} + \arrive)\costB{r}{\sts} \\
      (\bypass{r}{k} + \arrive)\costB{r}{\sts}
    \end{matrix}
    \biggr\} = -1.
  \end{equation}
\item
  BPOP~C has arrivals while serving but not while bypassing,
  so its optimality equation is
  \begin{equation}
    \label{eq:opt_c}
    \refstepcounter{equation}
    \tag{\theequation/C*}
    \min_k \min\biggl\{
    \begin{matrix}
      (\serve{k} + \arrive)\costC{r}{\sts} \\
      \bypass{r}{k}\costC{r}{\sts}
    \end{matrix}
    \biggr\} = -1.
  \end{equation}
\end{itemize}

There are a few details to clarify about the optimality equations above.
We have implicitly assumed that
the system state~$\sts$ has $n \geq 1$ jobs,
none of which are in the final stage~$z$.
If a job is in stage~$z$ or the system has no jobs,
a boundary condition applies as described in \fref{sub:bpop}.
Finally, the optimality equations admit spurious solutions
which we rule out in \fref{app:spurious}.

\subsection{Guessing the Cost Function: No Arrivals}
\label{sub:warmup}

As a warmup for guessing the cost function for BPOP~Q,
we focus on BPOP~P,
the problem of minimizing total expected queueing time
in an arrival-free setting.
The first step to understanding BPOP~P
is understanding the even simpler BPOP~A,
in which we want to minimize busy period length.

\begin{lemma}
  \label{lem:cost_a}
  The cost function of BPOP~A is
  \begin{equation*}
    \costA{r}{\sts}
    = \sum_k \costA{r}{\st{i_k}{x_k}}
    = \sum_k (r - \profit{\st{i_k}{x_k}}{r}).
  \end{equation*}
\end{lemma}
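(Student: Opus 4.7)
The plan is to verify that the proposed cost function satisfies the BPOP~A optimality equation \eqref{eq:opt_a} together with the boundary conditions from \fref{sub:bpop}. The guiding intuition is that because BPOP~A has no arrivals, jobs evolve independently and the total busy-period length is the sum over jobs of the time spent shepherding each to completion. For a single job in state $\st{i}{x}$, an optimal policy serves under some stopping policy~$\pi$ and bypasses continuously if $\pi$ gives up, yielding cost $E_\pi + (1 - P_\pi)r = r - (rP_\pi - E_\pi)$; minimizing this is exactly maximizing the SJP objective, so the per-job minimum is $r - \profit{\st{i}{x}}{r}$, and summing matches the claim.

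For the formal verification, I would set $\cost{\sts} = \sum_k (r - \profit{\st{i_k}{x_k}}{r})$ and check \eqref{eq:opt_a}. The boundary conditions are immediate: the empty sum gives $\cost{{}} = 0$, and $\profit{\st{\fin}{0}}{r} = r$ (a completed job collects full reward at no cost) yields $\cost{\dots, \st{\fin}{0}} = \cost{\dots}$. Applying the operators from \fref{def:operators}, the additive structure of $\cost{\sts}$ combined with $\sum_j \trans{i_k}{j} = 1$ and $\profit{\st{\fin}{0}}{r} = r$ makes all contributions from jobs $k' \neq k$ cancel, reducing each operator to the single-job expressions
\begin{align*}
  \serve{k}\cost{\sts}
    &= -\dd{x_k}\profit{\st{i_k}{x_k}}{r}
      + \hazard{i_k}{x_k}\Bigl(\profit{\st{i_k}{x_k}}{r}
        - \sum_j \trans{i_k}{j}\profit{\st{j}{0}}{r}\Bigr), \\
  \bypass{r}{k}\cost{\sts}
    &= -1 + \profit{\st{i_k}{x_k}}{r}/r.
\end{align*}

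To finish, I would invoke the HJB characterization of the SJP value function. Viewing $\profit{\st{i}{x}}{r}$ as the value of the stopping problem in \fref{def:sjp}, standard arguments partition the state space into a \emph{serve region} where $\profit{\st{i}{x}}{r} > 0$ and $\dd{x}\profit{\st{i}{x}}{r} + \hazard{i}{x}(\sum_j \trans{i}{j}\profit{\st{j}{0}}{r} - \profit{\st{i}{x}}{r}) = 1$, and a \emph{give-up region} where $\profit{\st{i}{x}}{r} = 0$ and the same left-hand side is $\leq 1$. Substituting, in the serve region $\serve{k}\cost{\sts} = -1$ and $\bypass{r}{k}\cost{\sts} > -1$, while in the give-up region $\bypass{r}{k}\cost{\sts} = -1$ and $\serve{k}\cost{\sts} \geq -1$, so $\min\{\serve{k}\cost{\sts}, \bypass{r}{k}\cost{\sts}\} = -1$ everywhere, matching \eqref{eq:opt_a}. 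Uniqueness among admissible cost functions is handled by \fref{app:spurious}. The trickiest step is the SJP HJB characterization itself, which needs a Bellman-type argument that accommodates general stage size distributions with arbitrary hazard rates --- either a citation to piecewise-deterministic-MDP theory or a short self-contained derivation structurally mirroring the verification above should suffice.
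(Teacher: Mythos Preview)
Your intuitive paragraph is already the paper's entire proof. The paper argues exactly as you do informally: without arrivals the busy-period length splits as a sum over jobs, and for a single job the cost of ``serve according to~$\pi$, then bypass if~$\pi$ gives up'' is $E_\pi + (1 - P_\pi)r = r - (rP_\pi - E_\pi)$, so minimizing cost is maximizing the SJP objective. The only detail the paper adds is a sentence justifying why one may restrict to such policies (a Markovian optimal policy never switches from bypassing back to serving, so bypassing for expected time~$r$ is equivalent to paying a lump-sum~$r$).

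Your formal route---verifying~\eqref{eq:opt_a} directly and invoking an HJB characterization of the SJP value function---is correct but takes a longer path. It requires two ingredients the paper never develops: the HJB equation for~$\profit{\st{i}{x}}{r}$ itself (which, as you note, needs either a citation to piecewise-deterministic MDP theory or a separate derivation), and a uniqueness argument for BPOP~A (\fref{app:spurious} explicitly treats only BPOP~Q). The paper's direct equivalence argument sidesteps both, at the cost of being less mechanical. Your verification approach would pay off if one later wanted, say, to check a guessed cost function for a BPOP~A variant where no clean problem equivalence is available.
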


\begin{proof}
  We will prove the desired formula for states with a single job.
  To extend to the case of multiple jobs,
  observe that the busy period length is the total time spent on all jobs,
  so we can minimize the time spent on each job individually.
  To show $\costA{r}{\st{i}{x}} = r - \profit{\st{i}{x}}{r}$
  consider the following variants of SJP with job~$\st{i}{x}$ and reward~$r$,
  which are clearly equivalent:
  \begin{enumerate}[(i)]
  \item
    \label{item:sjp}
    maximizing expected reward earned minus time spent,
    with reward~$r$ for completion
    and reward~$0$ for giving up; and
  \item
    \label{item:bpop_a}
    maximizing expected reward earned minus time spent,
    with reward~$0$ for completion
    and reward~$-r$ for giving up.
  \end{enumerate}
  Problem~\fref{item:sjp} is ordinary SJP,
  while problem~\fref{item:bpop_a} is equivalent to BPOP~A with a single job:
  there is no reward for completion,
  but we can give up by bypassing the job.
  Here we use the fact that there exists a Markovian optimal policy
  justify the equivalence of bypassing in expected time~$r$
  and giving up for reward~$-r$ are equivalent,
  because a Markovian policy will never switch from bypassing back to serving.
  The optimal profit in problem~\fref{item:bpop_a}
  is $\profit{\st{i}{x}}{r} - r$,
  and cost is negative profit,
  from which the desired formula follows.
\end{proof}

Relating the cost function of BPOP~A to SJP functions
lets us apply \fref{lem:sjp_properties}.
Specifically,
we learn that $\dd{r}\costA{r}{\st{i}{x}}$, as a function of~$r$,
is nonincreasing and bounded between $0$ and~$1$,
meaning it is a valid tail function of some random variable.

\begin{definition}
  \label{def:prevailing}
  The \emph{prevailing index} of a job in state $\st{i}{x}$,
  denoted~$\prevailing{i}{x}$,
  is a random variable with tail function
  \begin{equation*}
    \P{\prevailing{i}{x} > r} = \dd{r}\costA{r}{\st{i}{x}}.
  \end{equation*}
  For our purposes,
  the prevailing index $\prevailing{i}{x}$ is a computational tool
  used for its convenient distribution.
  When two prevailing indices appear in the same formula,
  we assume they are independent.
\end{definition}

There are two important properties of the prevailing index.
\Fref{def:fair} implies
\begin{equation}
  \label{eq:prevailing_geq_fair}
  \P{\prevailing{i}{x} \geq \fair{\st{i}{x}}} = 1,
\end{equation}
and integrating the tail function of the prevailing index yields
\begin{equation}
  \label{eq:cost_a_prevailing}
  \E{\min\{\prevailing{i}{x}, r\}}
  = \integral{0}{r} \P{\prevailing{i}{x} > s} \,ds
  = \costA{r}{\st{i}{x}}.
\end{equation}
Taking the $r \to \infty$ limit of~\fref{eq:cost_a_prevailing},
we find $\E{\prevailing{i}{x}} = \E{\totalsize{\st{i}{x}}}$
by Lemmas~\ref{lem:sjp_properties} and~\ref{lem:cost_a}.
That is, a state's prevailing index and remaining total size
have the same expectation.

We now return our focus to BPOP~P.
We already know the optimal policy for BPOP~P,
namely always serving the job of minimal SJP index,
but it remains to find its cost function~$\costP{}$.
Because there are no arriving jobs,
we guess that expected total queueing time has the form
\begin{equation*}
  \costP{\sts}
  = \sum_{\mathclap{k < \ell}} \costP{\st{i_k}{x_k}, \st{i_\ell}{x_\ell}}.
\end{equation*}
To see why this holds,
note that we can think of queueing time as
the sum over all pairs of jobs of
the ``interference'' between the two jobs.
By interference,
we mean the amount of time such that one of the jobs is waiting while
the other job is in service.
Because we always serve the job of minimal SJP index,
the interference between two jobs $k$ and~$\ell$ is the same
no matter what other jobs are in the system,
as we can simply ignore every moment in time
when neither $k$ nor $\ell$ is in service.
By considering the case where $k$ and $\ell$ are the only jobs in the system,
we find that the interference between them is
$\costP{\st{i_k}{x_k}, \st{i_\ell}{x_\ell}}$, as desired.

It remains to guess $\costP{\st{i_k}{x_k}, \st{i_\ell}{x_\ell}}$.
With only two jobs in the system,
queueing time is just the completion time of whichever job finishes first.
Under the policy of serving $k$ first no matter what,
the expected queueing time would be $\E{\prevailing{i_k}{x_k}}$
by~\fref{eq:cost_a_prevailing},
and similarly for serving~$\ell$ first.
It turns out that the optimal policy, improving on both of these options,
has expected queueing time
\begin{align*}
  \costP{\st{i_k}{x_k}, \st{i_\ell}{x_\ell}}
  &= \E{\min\{\prevailing{i_k}{x_k}, \prevailing{i_\ell}{x_\ell}\}} \\
  &= \integral{0}{\infty}
      \P{\prevailing{i_k}{x_k} > r
        \text{ and } \prevailing{i_\ell}{x_\ell} > r}
     \,dr \\
  &= \integral{0}{\infty}
      \dd{r}\costA{r}{\st{i_k}{x_k}}\nudgel
      \dd{r}\costA{r}{\st{i_\ell}{x_\ell}}
     \,dr.
\end{align*}
It suffices to verify our guess with the optimality equation \fref{eq:opt_p}.

\begin{proposition}
  \label{prop:cost_p}
  The cost function of BPOP~P is
  \begin{equation*}
    \costP{\sts}
    = \sum_{\mathclap{k < \ell}}\mkern5mu
        \integral{0}{\infty}
          \dd{r}\costA{r}{\st{i_k}{x_k}}\nudgel
          \dd{r}\costA{r}{\st{i_\ell}{x_\ell}}
        \,dr.
  \end{equation*}
\end{proposition}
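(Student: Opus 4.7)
The plan is to verify that the guessed formula satisfies the BPOP~P optimality equation~\fref{eq:opt_p}, namely $\min_k\serve{k}\costP{\sts} = -(n-1)$. Uniqueness of solutions (ruling out spurious alternatives) is handled in \fref{app:spurious}, so all that remains is the verification itself.

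Write $F_m(r) = \dd{r}\costA{r}{\st{i_m}{x_m}}$. Because $\costP{\sts}$ is a sum of pairwise integrals in which $F_m(r)$ depends only on the state of job~$m$, and the operator $\serve{k}$ acts only on the $k$-th coordinate of the system state, linearity collapses the sum to
\begin{equation*}
  \serve{k}\costP{\sts}
  = \sum_{\ell\neq k} \integral{0}{\infty} (\serve{k} F_k(r))\, F_\ell(r)\,dr.
\end{equation*}
Since $r$ is a parameter in $\costA{r}{\cdot}$, the operators $\serve{k}$ and $\dd{r}$ commute, so $\serve{k}F_k(r) = \dd{r}\serve{k}\costA{r}{\st{i_k}{x_k}}$. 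Integrating by parts---with the boundary terms vanishing because $\serve{k}\costA{0}{\cdot} = 0$ (bypass is free at $r = 0$) and $F_\ell(r) \to 0$ as $r \to \infty$---yields
\begin{equation*}
  \serve{k}\costP{\sts}
  = \sum_{\ell\neq k} \integral{0}{\infty} \serve{k}\costA{r}{\st{i_k}{x_k}}\, f_\ell(r)\,dr
  = \sum_{\ell \neq k}\E{\serve{k}\costA{\prevailing{i_\ell}{x_\ell}}{\st{i_k}{x_k}}},
\end{equation*}
where $f_\ell$ is the density of the prevailing index $\prevailing{i_\ell}{x_\ell}$ from \fref{def:prevailing}.

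Next, apply the single-job version of the BPOP~A optimality equation, $\min\{\serve{k},\bypass{r}{k}\}\costA{r}{\st{i_k}{x_k}} = -1$. Using \fref{lem:cost_a} we get $\bypass{r}{k}\costA{r}{\st{i_k}{x_k}} = -(r - \profit{\st{i_k}{x_k}}{r})/r$, which strictly exceeds $-1$ whenever $r > \fair{\st{i_k}{x_k}}$; this forces $\serve{k}\costA{r}{\st{i_k}{x_k}} = -1$ on that range, while the optimality equation still guarantees $\serve{k}\costA{r}{\st{i_k}{x_k}} \geq -1$ for all~$r$. Let $k^*$ be a job minimizing $\fair{\st{i_{k^*}}{x_{k^*}}}$. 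Then for every $\ell \neq k^*$, \fref{eq:prevailing_geq_fair} gives $\prevailing{i_\ell}{x_\ell} \geq \fair{\st{i_\ell}{x_\ell}} \geq \fair{\st{i_{k^*}}{x_{k^*}}}$ almost surely, so the integrand is identically $-1$ and $\serve{k^*}\costP{\sts} = -(n-1)$. For any other~$k$, the crude bound $\serve{k}\costA{r}{\cdot}\geq -1$ yields $\serve{k}\costP{\sts} \geq -(n-1)$, so the minimum is attained at $k^*$, verifying~\fref{eq:opt_p}.

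The main obstacle is spotting the integration-by-parts identity that converts the opaque object $\serve{k}F_k(r)$ into the Hamilton--Jacobi--Bellman quantity $\serve{k}\costA{r}{\st{i_k}{x_k}}$; once that reduction is in hand, everything follows from the single-job BPOP~A optimality and the tail-function characterization of the prevailing index. The remaining technicalities---commuting $\serve{k}$ with $\dd{r}$, checking the boundary terms, and accommodating the $x$-dependence of the threshold $\fair{\st{i_k}{x_k}}$---are routine given the regularity in \fref{lem:sjp_properties}.
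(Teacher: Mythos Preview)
Your proposal is correct and follows essentially the same approach as the paper: reduce each pairwise ``interference'' term to an expectation of $\serve{k}\costA{r}{\st{i_k}{x_k}}$ over the prevailing index~$\prevailing{i_\ell}{x_\ell}$, then invoke the single-job BPOP~A optimality equation and \fref{eq:prevailing_geq_fair}. The only cosmetic difference is ordering: the paper integrates by parts \emph{first} to obtain $\cost[k\ell]{\sts} = \E{\costA{\prevailing{i_\ell}{x_\ell}}{\st{i_k}{x_k}}}$ and then applies $\serve{k}$ through the expectation (which is automatic since $\prevailing{i_\ell}{x_\ell}$ is independent of job~$k$'s state), whereas you push $\serve{k}$ through the integral and through $\dd{r}$ \emph{before} integrating by parts, which obliges you to justify the commutation $\serve{k}\,\dd{r} = \dd{r}\,\serve{k}$ and check an extra boundary term. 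Both routes arrive at the same expression $\E{\serve{k}\costA{\prevailing{i_\ell}{x_\ell}}{\st{i_k}{x_k}}}$, and the remaining argument is identical.
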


\begin{proof}
  We will show the desired formula satisfies
  the optimality equation \fref{eq:opt_p}.
  Let
  \begin{equation*}
    \iftoggle{widecol}{%
      \cost[k\ell]{\dots, \st{i_k}{x_k}, \dots, \st{i_\ell}{x_\ell}, \dots}}{%
      \cost[k\ell]{\sts}}
    = \integral{0}{\infty}
        \dd{r}\costA{r}{\st{i_k}{x_k}}\nudgel
        \dd{r}\costA{r}{\st{i_\ell}{x_\ell}}
      \,dr
  \end{equation*}
  be the ``interference'' between $k$ and~$\ell$.
  It suffices to show it satisfies $\serve{k}\cost[k\ell]{\sts} \geq -1$,
  with equality for some~$k$ and all $\ell \neq k$.
  This suffices because $\serve{k}\cost[\ell m]{\sts} = 0$
  for all $\ell, m \neq k$, so
  \begin{equation*}
    \serve{k}\costP{\sts}
    = \sum_{\mathclap{l \neq k}}\serve{k}\cost[k\ell]{\sts}
    \geq -(n - 1),
  \end{equation*}
  achieving equality for some~$k$
  as required by the optimality equation \fref{eq:opt_p}.
  Using \fref{def:prevailing} and integration by parts, we compute
  \begin{align}
    \label{eq:integration_by_parts}
    \cost[k\ell]{\sts}
    \iftoggle{widecol}{%
      &= \integral{0}{\infty}
          \dd{r}\costA{r}{\st{i_k}{x_k}}\nudgel
          \dd{r}\costA{r}{\st{i_\ell}{x_\ell}}
        \,dr \\}{}
    &= \integral{0}{\infty}
        \P{\prevailing{i_\ell}{x_\ell} > r}
        \dd{r}\costA{r}{\st{i_k}{x_k}}
      \,dr \\
    &= \E{\costA{\prevailing{i_\ell}{x_\ell}}{\st{i_k}{x_k}}}.
  \end{align}
  We know the cost function of BPOP~A must satisfy \fref{eq:opt_a}, so,
  slightly abusing the $\serve{k}$ notation
  by applying it to $\costA{r}{\st{i_k}{x_k}}$,
  \begin{equation*}
    \serve{k}\cost[k\ell]{\sts}
    = \E{\serve{k}\costA{\prevailing{i_\ell}{x_\ell}}{\st{i_k}{x_k}}}
    \geq -1.
  \end{equation*}
  The job~$k$ for which equality holds is that of minimal SJP index.
  If $\fair{\st{i_k}{x_k}} \leq \fair{\st{i_\ell}{k_\ell}}$,
  then $\P{\fair{\st{i_k}{x_k}} \leq \prevailing{i_\ell}{x_\ell}} = 1$
  by~\fref{eq:prevailing_geq_fair}.
  This and \fref{def:fair} imply that with probability~$1$,
  serving~$k$ is the optimal action
  in BPOP~A with job~$k$
  and random expected bypass time $\prevailing{i_\ell}{x_\ell}$,
  so $\serve{k}\costA{\prevailing{i_\ell}{x_\ell}}{\st{i_k}{x_k}} = -1$,
  as desired.
\end{proof}

\subsection{Guessing the Cost Function: Arrivals}
\label{sub:guess}

We have successfully guessed the cost function for BPOP~P.
Our next task is to adapt our guess to BPOP~Q, which incorporates arrivals.
Similarly to how understanding BPOP~A helped with BPOP~P,
we first take some time to understand BPOPs B and~C.

We begin by relating BPOP~B, which has arrivals at rate~$\lambda$,
to BPOP~A, which has no arrivals.
In \fref{lem:cost_a}, we showed that the cost of BPOP~A
decomposes into a sum of independent single-job terms.
This decomposition works because
we can minimize the expected busy period length
by \emph{individually optimizing} the bypass policy for each job.
Our guess for BPOP~B is that, even with arrivals,
we can continue to optimize each job individually
when it comes to bypass decisions.
This makes the cost of BPOP~B the length of a busy period
whose load is given by the following definition.

\begin{definition}
  \label{def:load}
  The \emph{bypass-adjusted remaining size}
  of a job in state $\st{i}{x}$ with expected bypass time~$r$ is
  $\costA{r}{\st{i}{x}}$,
  and the \emph{bypass-adjusted load} is
  \begin{equation*}
    \load{r} = \lambda\costA{r}{\st{a}{0}}.
  \end{equation*}
  We recover the ordinary load as
  $\load{} = \lim_{r \to \infty}\load{r} = \lambda\E{\totalsize{\st{a}{0}}}$.
\end{definition}

BPOP~C is similar to BPOP~B, except arrivals stop while bypassing a job.
To guess the cost function of BPOP~C,
we consider that a busy period started by a bypass in BPOP~B takes time
\begin{equation*}
  \frozen{r} = \frac{r}{1 - \load{r}}.
\end{equation*}

\begin{lemma}
  \label{lem:cost_bc}
  The cost functions of BPOPs~B and~C satisfy
  \begin{equation*}
    \costB{r}{\sts}
    = \costC{\frozen{r}}{\sts}
    = \frac{\costA{r}{\sts}}{1 - \load{r}}.
  \end{equation*}
\end{lemma}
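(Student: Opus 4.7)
The plan is to guess-and-verify using the optimality equations \fref{eq:opt_b} and \fref{eq:opt_c}, following the structure of the proof of \fref{prop:cost_p}; uniqueness (ruling out spurious solutions) is handled by \fref{app:spurious}. The overall strategy is to show that the proposed formulas make \fref{eq:opt_b} and \fref{eq:opt_c} collapse algebraically to \fref{eq:opt_a}, which $\costA{r}{\sts}$ already satisfies by \fref{lem:cost_a}.

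First I would compute how the three operators act on $\costB{r}{\sts} = \costA{r}{\sts}/(1-\load{r})$. Since $\load{r}$ does not depend on $\sts$, the serve and bypass operators pass through the division: $\serve{k}\costB{r}{\sts} = \serve{k}\costA{r}{\sts}/(1-\load{r})$, and analogously for $\bypass{r}{k}$. For the arrive operator, the additive decomposition in \fref{lem:cost_a} gives $\costA{r}{\sts, \st{a}{0}} - \costA{r}{\sts} = \costA{r}{\st{a}{0}}$, hence
\begin{equation*}
  \arrive \costB{r}{\sts}
  = \frac{\lambda \costA{r}{\st{a}{0}}}{1 - \load{r}}
  = \frac{\load{r}}{1 - \load{r}}.
\end{equation*}
Substituting into \fref{eq:opt_b} and multiplying through by $1 - \load{r}$, the conditions $(\serve{k} + \arrive)\costB{r}{\sts} \geq -1$ and $(\bypass{r}{k} + \arrive)\costB{r}{\sts} \geq -1$ become exactly $\serve{k}\costA{r}{\sts} \geq -1$ and $\bypass{r}{k}\costA{r}{\sts} \geq -1$, which are the BPOP~A conditions. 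Moreover, the same $(k, \text{action})$ pair attaining equality in \fref{eq:opt_a} attains equality in \fref{eq:opt_b}.

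For BPOP~C, the relation $\costC{\frozen{r}}{\sts} = \costB{r}{\sts}$ means the serve and arrive contributions carry over unchanged from the BPOP~B analysis. The bypass contribution is the delicate step, because $\bypass{\frozen{r}}{k}$ has $\frozen{r}$ rather than $r$ in its denominator:
\begin{equation*}
  \bypass{\frozen{r}}{k}\costC{\frozen{r}}{\sts}
  = \frac{1}{\frozen{r}(1-\load{r})}(\costA{r}{\dots, \st{\fin}{0}, \dots} - \costA{r}{\dots, \st{i_k}{x_k}, \dots})
  = \frac{r}{\frozen{r}(1-\load{r})}\bypass{r}{k}\costA{r}{\sts}.
\end{equation*}
The identity $\frozen{r}(1-\load{r}) = r$ makes the prefactor equal $1$, so $\bypass{\frozen{r}}{k}\costC{\frozen{r}}{\sts} = \bypass{r}{k}\costA{r}{\sts}$, and \fref{eq:opt_c} also reduces to \fref{eq:opt_a}.

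The main obstacle is recognizing \emph{why} $\frozen{r} = r/(1-\load{r})$ is precisely the right expected bypass time: it is the exact value that cancels the busy-period expansion caused by arrivals during a bypass in BPOP~B, so that an arrival-free bypass in BPOP~C of expected length $\frozen{r}$ matches a bypass with arrivals of expected length $r$ in BPOP~B. Once this cancellation is identified, both BPOPs inherit their optimality equations from BPOP~A, and the boundary conditions $\costB{r}{{}} = 0$ and $\costB{r}{\dots, \st{\fin}{0}} = \costB{r}{\dots}$ follow immediately from those of $\costA{r}{\cdot}$ using \fref{lem:cost_a} and the convention $\profit{\st{\fin}{0}}{r} = r$.
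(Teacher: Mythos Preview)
Your proposal is correct and follows essentially the same guess-and-verify strategy as the paper: compute the action of $\serve{k}$, $\bypass{\cdot}{k}$, and $\arrive$ on the candidate formula and show that the optimality equations \fref{eq:opt_b} and \fref{eq:opt_c} collapse to \fref{eq:opt_a}. The only cosmetic difference is in the BPOP~C bypass branch: the paper routes through BPOP~B, writing $\bypass{\frozen{r}}{k}\costC{\frozen{r}}{\sts} = (1-\load{r})(\bypass{r}{k}+\arrive)\costB{r}{\sts} - \load{r}$ and invoking \fref{eq:opt_b}, whereas you reduce directly to $\bypass{r}{k}\costA{r}{\sts}$ via the identity $\frozen{r}(1-\load{r}) = r$; both yield the same inequality with the same equality cases.
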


\proofin{app:proofs}

We are finally ready to address BPOP~Q.
Taking inspiration from our solution to BPOP~P in \fref{prop:cost_p},
we might hope that $\costQ{\sts}$ is a sum of ``interferences''
$\cost[k\ell]{\sts}$,
though the definition of interference should certainly change.
But we immediately see that it is not so simple,
because $\costQ{\st{a}{0}}$ is clearly nonzero
even though there is only one job in the system.
We instead guess that $\costQ{}$ has the form
\begin{equation*}
  \costQ{\sts}
  = \sum_k \costQ{\st{i_k}{x_k}}
    + \sum_{\mathclap{k < \ell}} \cost[k\ell]{\sts},
\end{equation*}
where, as in $\costP{}$, the interference $\cost[k\ell]{\sts}$
depends only on jobs $k$ and~$\ell$.
While the interference in BPOP~P is in terms of BPOP~A,
we guess that the interference in BPOP~Q is in terms of BPOP~C,
which has arrivals during service.

\begin{proposition}
  \label{prop:cost_q}
  The cost function of BPOP~Q satisfies
  \begin{align}
    \label{eq:cost_q_one}
    \iftoggle{widecol}{}{&}
    \costQ{\st{i}{x}}
    \iftoggle{widecol}{&}{}
    = \load{}\costQ{\st{a}{0}}
    \iftoggle{widecol}{\\ &\quad}{\\ &\qqquad}
      + \integral{x}{\infty}
        \integral{0}{\infty}
          \lambda \frac{\tail[i]{y}}{\tail[i]{x}}\nudgec
          \dd{s}\costC{s}{\st{i}{y}}\nudgel
          \dd{s}\costC{s}{\st{a}{0}}
        \,ds
      \,dy
    \iftoggle{widecol}{\\ &\quad}{\\ &\qqquad}
      + \sum_{\mathclap{j \neq i}}\mkern3mu
        \integral{0}{\infty}
          \integral{0}{\infty}
            \lambda q_{ij} \tail[j]{y}\nudgel
            \dd{s}\costC{s}{\st{j}{y}}\nudgel
            \dd{s}\costC{s}{\st{a}{0}}
          \,ds
        \,dy,
  \end{align}
  where $q_{ij}\esub$ is the probability that
  a job starting in stage~$i$ at some point reaches stage~$j$,
  and
  \iftoggle{widecol}{\begin{equation}}{\begin{align}}
    \label{eq:cost_q_many}
    \costQ{\sts}
    \iftoggle{widecol}{}{&}
    = \sum_k\costQ{\st{i_k}{x_k}}
    \iftoggle{widecol}{}{\\ &\quad}
      + \sum_{\mathclap{k < \ell}}\mkern5mu
        \integral{0}{\infty}
          \dd{s}\costC{s}{\st{i_k}{x_k}}\nudgel
          \dd{s}\costC{s}{\st{i_\ell}{x_\ell}}
        \,ds.
  \iftoggle{widecol}{\end{equation}}{\end{align}}
\end{proposition}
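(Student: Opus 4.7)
The plan is to verify the guessed formulas satisfy the BPOP~Q optimality equation \fref{eq:opt_q}, following the template from the proof of \fref{prop:cost_p}. The decomposition in \fref{eq:cost_q_many} mirrors the BPOP~P result: single-job contributions plus pairwise interferences $\cost[k\ell]{\sts}$. The key conceptual change is that the interference integrand now uses the BPOP~C cost function $\costC{s}{\cdot}$ rather than BPOP~A's $\costA{s}{\cdot}$. This substitution reflects the fact that in BPOP~Q arrivals occur while the scheduler serves either of the two interfering jobs, while new arrivals themselves are accounted for separately through their own single-job and interference terms.

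For the pairwise part, integration by parts as in \fref{eq:integration_by_parts} rewrites $\cost[k\ell]{\sts}$ as $\E{\costC{R}{\st{i_k}{x_k}}}$ with $R$ a random variable of tail $\dd{s}\costC{s}{\st{i_\ell}{x_\ell}}$. Since $\cost[k\ell]{\sts}$ depends only on the two job states $\st{i_k}{x_k}$ and $\st{i_\ell}{x_\ell}$, the arrive operator vanishes on each individual pair, so $(\serve{k} + \arrive)\cost[k\ell]{\sts}$ reduces to $\serve{k}\cost[k\ell]{\sts}$. Using \fref{lem:cost_bc} to pass between $\costC{s}{\cdot}$ and $\costA{r}{\cdot}/(1-\load{r})$, then applying the BPOP~A optimality equation \fref{eq:opt_a} pointwise inside the expectation as at the end of the proof of \fref{prop:cost_p}, yields $\serve{k}\cost[k\ell]{\sts} \geq -1$, with equality when $\fair{\st{i_k}{x_k}} \leq \fair{\st{i_\ell}{x_\ell}}$.

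For the single-job part, $\arrive$ applied to the full $\costQ{\sts}$ produces $\lambda\costQ{\st{a}{0}}$ together with a sum of new pairwise interferences between existing jobs and the arrival. I would verify that these new interferences match precisely the two integrals in \fref{eq:cost_q_one}: the self-stage integral over $y > x$ encodes pairwise interferences with arrivals occurring while the existing job remains in its current stage, weighted by $\tail[i]{y}/\tail[i]{x}$, while the future-stage sum over $j \neq i$ with weight $q_{ij}\tail[j]{y}$ encodes interferences with arrivals occurring after the job has transitioned. The remaining $\load{}\costQ{\st{a}{0}}$ term in \fref{eq:cost_q_one} absorbs the $\lambda\costQ{\st{a}{0}}$ contribution via a renewal-style identification using \fref{lem:cost_bc}.

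The main obstacle is the algebraic verification that $\serve{k}\costQ{\st{i_k}{x_k}}$ combined with the arrival-generated cross-terms balances to zero for the optimal choice of $k$. The derivative in $x_k$ of $\tail[i_k]{y}/\tail[i_k]{x_k}$ in the self-stage integral, the hazard-rate jumps from stage $i_k$ into subsequent stages (which generate $\costQ{\st{j}{0}}$-valued contributions), and the boundary term at $y = x_k$ must reconcile via $\hazard{i}{x}\tail[i]{x} = \density[i]{x}$ and the first-step decomposition of the reachability probabilities $q_{ij}$. Once these cancellations are established, the single-job contributions and pairwise bounds together exactly satisfy \fref{eq:opt_q}, with equality realized at the job of minimal SJP index, thereby identifying the guess as the BPOP~Q cost function.
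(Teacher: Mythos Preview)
Your proposal has a genuine gap in the handling of the pairwise interference terms. You claim that because $\cost[k\ell]{\sts}$ depends only on the states of jobs $k$ and $\ell$, the arrive operator vanishes on it, so that $(\serve{k} + \arrive)\cost[k\ell]{\sts}$ reduces to $\serve{k}\cost[k\ell]{\sts}$, and you then aim to show $\serve{k}\cost[k\ell]{\sts} \geq -1$ by passing through \fref{lem:cost_bc} to BPOP~A. This fails: writing $\cost[k\ell]{\sts} = \E{\costC{\freezing{i_\ell}{x_\ell}}{\st{i_k}{x_k}}}$ and using $\costC{\frozen{r}}{\cdot} = \costA{r}{\cdot}/(1-\load{r})$ from \fref{lem:cost_bc}, applying $\serve{k}$ alone gives
\[
  \serve{k}\costC{\frozen{r}}{\st{i_k}{x_k}}
  = \frac{1}{1-\load{r}}\,\serve{k}\costA{r}{\st{i_k}{x_k}}
  \geq -\frac{1}{1-\load{r}},
\]
not $\geq -1$. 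The bound you need is simply false without the arrival contribution.

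The paper's proof (via \fref{lem:opt_interference}) instead applies $(\serve{k} + \arrive)$ to $\costC{s}{\st{i_k}{x_k}}$ \emph{inside} the expectation, invoking the BPOP~C optimality equation \fref{eq:opt_c} directly rather than detouring through BPOP~A. Here $\arrive$ acts on $\costC{s}{\cdot}$ as a BPOP~C cost function, which is nonzero: $\arrive\costC{\frozen{r}}{\st{i_k}{x_k}} = \load{r}/(1-\load{r})$, and this exactly cancels the shortfall above to yield $-1$. Equivalently, if you insist on interpreting $\arrive\cost[k\ell]{\sts} = 0$ in the BPOP~Q sense, then the full expansion of $\arrive\costQ{\sts}$ produces cross-terms $\lambda\cost[\ell(n+1)]{\sts,\st{a}{0}}$ for every $\ell \neq k$, and each of these must be grouped with the corresponding $\serve{k}\cost[k\ell]{\sts}$ to reach $-1$. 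Your accounting drops these terms. The single-job part is also looser than the paper's, which handles it cleanly by applying \fref{lem:serve_one} with $\alpha(\st{i}{x}) = \lambda\costQ{\st{a}{0}} + \lambda\cost[1\,2]{\st{i}{x},\st{a}{0}}$.
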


To prove \fref{prop:cost_q},
we need some auxiliary definitions and lemmas.
Let
\begin{equation*}
  \iftoggle{widecol}{%
    \cost[k\ell]{\dots, \st{i_k}{x_k}, \dots, \st{i_\ell}{x_\ell}, \dots}}{%
    \cost[k\ell]{\sts}}
  = \integral{0}{\infty}
      \dd{s}\costC{s}{\st{i_k}{x_k}}\nudgel
      \dd{s}\costC{s}{\st{i_\ell}{x_\ell}}
    \,ds
\end{equation*}
be the interference between jobs $k$ and~$\ell$.
We can rewrite \fref{eq:cost_q_one} and \fref{eq:cost_q_many}
in terms of interferences.
As in the proof of \fref{prop:cost_p},
the key to proving \fref{prop:cost_q}
is to write the interference between jobs $k$ and~$\ell$
as a minimum of random variables.
Just as interference in BPOP~P uses a random variable defined using BPOP~A,
interference in BPOP~Q uses a random variable defined using BPOP~C.

\begin{definition}
  The \emph{prevailing index with arrivals} of a job in state $\st{i}{x}$,
  denoted~$\freezing{i}{x}$
  is a random variable with tail function
  \begin{equation}
    \label{eq:freezing}
    \P{\freezing{i}{x} > s} = \dd{s}\costC{s}{\st{i}{x}}.
  \end{equation}
  Like the prevailing index without arrivals of \fref{def:prevailing},
  the variant with arrivals
  is a computational tool used for its convenient distribution,
  and we assume pairwise independence.
\end{definition}

\begin{lemma}
  \label{lem:freezing_defined}
  For any state $\st{i}{x}$,
  the prevailing index with arrivals $\freezing{i}{x}\esub$ is a well defined.
  That is, $\dd{s}\costC{s}{\st{i}{x}}$ is a valid tail function which is
  decreasing in~$s$ and bounded between $0$ and~$1$.
\end{lemma}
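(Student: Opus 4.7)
The plan is to leverage \fref{lem:cost_bc}, which gives $\costC{s}{\st{i}{x}} = \costA{r}{\st{i}{x}}/(1-\load{r})$ under the substitution $s = \frozen{r} = r/(1-\load{r})$, and then verify the tail-function properties by a chain-rule computation.

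First I would check that $r \mapsto s$ is a smooth strictly increasing bijection of $[0,\infty)$ to itself. This uses $\load{} < 1$ by stability together with $\load{r}$ being nondecreasing, both of which come from \fref{lem:sjp_properties} and \fref{lem:cost_a}. Writing $f(r) = \costA{r}{\st{i}{x}}$, applying the chain rule to $\costC{\frozen{r}}{\st{i}{x}} = f(r)/(1-\load{r})$ gives
\begin{equation*}
  \dd{s}\costC{s}{\st{i}{x}} = \frac{N(r)}{D(r)},
\end{equation*}
where $N(r) = f'(r)(1-\load{r}) + f(r)\dd{r}\load{r}$ and $D(r) = (1-\load{r}) + r\dd{r}\load{r}$.

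Boundedness in $[0,1]$ is then straightforward. By \fref{lem:sjp_properties} and \fref{lem:cost_a}, $f$ is concave nondecreasing with $f(0) = 0$ and $f'(r) \in [0,1]$, and the same structure holds for $\load{r} = \lambda\costA{r}{\st{a}{0}}$. All four terms in $N$ and $D$ are therefore nonnegative, giving $N/D \geq 0$. For the upper bound, $N \leq D$ rearranges to $(1 - f'(r))(1-\load{r}) + (r - f(r))\dd{r}\load{r} \geq 0$, which holds term-by-term via $f'(r) \leq 1$ and $f(r) \leq r$.

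The main obstacle is monotonicity: showing $N(r)/D(r)$ is nonincreasing in $r$ (and hence in $s$, since $s$ is increasing in $r$). Computing $(N/D)'$ and simplifying yields
\begin{equation*}
  N'(r)D(r) - N(r)D'(r) = (1 - \load{r})\bigl[f''(r) D(r) + (f(r) - rf'(r))\tfrac{d^2}{dr^2}\load{r}\bigr].
\end{equation*}
Both second derivatives are nonpositive by concavity of $f$ and $\load{r}$; the factor $D(r)$ is positive; and $f(r) - rf'(r) \geq 0$ follows from the supporting-line inequality for the concave function $f$ applied at $r$ and evaluated at $0$, using $f(0) = 0$. Hence the bracket is nonpositive and $N/D$ is nonincreasing; combined with the limit $\lim_{r \to \infty} N(r)/D(r) = 0$ (using $f'(\infty) = 0$ and $r\dd{r}\load{r} \to 0$), this confirms that $\dd{s}\costC{s}{\st{i}{x}}$ is a valid tail function. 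The mild technicality that $f$ and $\load{r}$ are only almost-everywhere twice differentiable can be handled by a standard smoothing or finite-difference argument on the concave integrands.
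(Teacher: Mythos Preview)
Your proposal is correct, but it takes a considerably more laborious route than the paper. The paper argues directly from the definition of BPOP~C: for any \emph{fixed} policy, the expected busy-period length is $s\cdot\E{\text{number of bypassed jobs}} + \E{\text{time spent serving}}$, which is affine in~$s$ because in BPOP~C no arrivals occur during bypasses, so changing~$s$ affects only the bypass durations and nothing else. Hence $\costC{s}{\st{i}{x}}$ is an infimum of affine nondecreasing functions of~$s$, therefore concave and nondecreasing, which immediately gives the derivative nonincreasing and nonnegative. The upper bound $\costC{s}{\st{i}{x}} \leq s$ comes from the single policy ``bypass immediately,'' forcing the derivative to be at most~$1$. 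This is essentially the same trick as in \fref{lem:sjp_properties}, and it sidesteps all of your chain-rule algebra and the second-derivative smoothing technicality.

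Your computational route does work: the $N/D$ formula, the sign analysis for $N'D - ND'$, and the limiting argument are all sound (indeed for $N/D \to 0$ you only need $N \to 0$ and $D \geq 1 - \load{} > 0$, so the $r\,\dd{r}\load{r} \to 0$ step is not even needed). The upside of your approach is that it yields an explicit expression for the tail $\P{\freezing{i}{x} > s}$ in terms of $\costA{r}{\cdot}$ and $\load{r}$, which could be useful downstream; the paper's approach gives the qualitative properties with almost no computation but no such formula.
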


\begin{lemma}
  \label{lem:freezing_fair}
  For any state $\st{i}{x}$,
  \begin{equation*}
    \P{\freezing{i}{x} \geq \frozen{\fair{\st{i}{x}}}} = 1.
  \end{equation*}
\end{lemma}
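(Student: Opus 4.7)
My plan is to reduce the claim to the identity $\costC{s}{\st{i}{x}} = s$ on the interval $[0, \frozen{\fair{\st{i}{x}}}]$. By the definition of $\freezing{i}{x}$ in \fref{eq:freezing}, the desired conclusion $\P{\freezing{i}{x} \geq \frozen{\fair{\st{i}{x}}}} = 1$ is equivalent to $\dd{s}\costC{s}{\st{i}{x}} = 1$ for almost every $s < \frozen{\fair{\st{i}{x}}}$, which in turn follows from $\costC{s}{\st{i}{x}} = s$ on that interval together with the monotonicity and upper bound $\costC{s}{\st{i}{x}} \leq s$ (itself inherited via \fref{lem:cost_bc} from $\costA{r}{\st{i}{x}} = r - \profit{\st{i}{x}}{r} \leq r$).

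The key calculation is the following. Substituting $s = \frozen{r}$, \fref{lem:cost_bc} and \fref{lem:cost_a} give
\begin{equation*}
  \costC{\frozen{r}}{\st{i}{x}}
  = \costB{r}{\st{i}{x}}
  = \frac{\costA{r}{\st{i}{x}}}{1 - \load{r}}
  = \frac{r - \profit{\st{i}{x}}{r}}{1 - \load{r}}.
\end{equation*}
For $r \in [0, \fair{\st{i}{x}}]$, \fref{def:fair} yields $\profit{\st{i}{x}}{r} = 0$, so the right-hand side collapses to $r/(1 - \load{r}) = \frozen{r}$. Hence $\costC{\frozen{r}}{\st{i}{x}} = \frozen{r}$ on $[0, \fair{\st{i}{x}}]$.

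To transfer this identity from $r$-values to $s$-values, I would verify that $\frozen{}$ is a continuous, strictly increasing bijection from $[0, \fair{\st{i}{x}}]$ onto $[0, \frozen{\fair{\st{i}{x}}}]$: continuity and monotonicity of $\load{r} = \lambda(r - \profit{\st{a}{0}}{r})$ follow from \fref{lem:sjp_properties} (convexity and derivative at most~$1$), and stability gives $\load{r} < 1$, so $\frozen{r} = r/(1 - \load{r})$ is strictly increasing and continuous. Thus $\costC{s}{\st{i}{x}} = s$ throughout $[0, \frozen{\fair{\st{i}{x}}}]$, giving $\dd{s}\costC{s}{\st{i}{x}} = 1$ almost everywhere on that interval and completing the proof. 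The only mildly delicate step is this last transfer; everything else is direct bookkeeping built on \fref{lem:cost_bc}, \fref{lem:cost_a}, and \fref{def:fair}.
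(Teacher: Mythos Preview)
Your proposal is correct and follows essentially the same route as the paper: use \fref{lem:cost_bc} (together with \fref{lem:cost_a}) to show $\costC{\frozen{r}}{\st{i}{x}} = \frozen{r}$ whenever $r \leq \fair{\st{i}{x}}$, then conclude that the tail $\dd{s}\costC{s}{\st{i}{x}}$ equals~$1$ on $[0,\frozen{\fair{\st{i}{x}}})$. Your version is simply more explicit about the intermediate step the paper leaves implicit, namely that $r \mapsto \frozen{r}$ is a continuous increasing bijection onto $[0,\frozen{\fair{\st{i}{x}}}]$, so that the identity in the $r$-variable transfers to the $s$-variable.
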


\proofin[Proofs of Lemmas~\ref{lem:freezing_defined}
  and~\ref{lem:freezing_fair}]{app:proofs}

We are ready for the main technical lemma behind \fref{prop:cost_q}.
Its proof is similar to the last part of the proof of \fref{prop:cost_p}.

\begin{lemma}
  \label{lem:opt_interference}
  For any jobs $k$ and~$\ell$ with
  $\fair{\st{i_k}{x_k}} \leq \fair{\st{i_\ell}{x_\ell}}\esub$,
  \begin{equation*}
    (\serve{\ell} + \arrive)\cost[k\ell]{\sts}
    \geq (\serve{k} + \arrive)\cost[k\ell]{\sts}
    = -1.
  \end{equation*}
\end{lemma}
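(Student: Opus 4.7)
My plan is to adapt the strategy from the proof of \fref{prop:cost_p}, replacing BPOP~A with BPOP~C throughout and using the prevailing index with arrivals $\freezing{i_\ell}{x_\ell}$ in place of $\prevailing{i_\ell}{x_\ell}$. The first step is integration by parts, using the tail function in \fref{eq:freezing} to rewrite the interference as a single-job BPOP~C expectation:
\begin{equation*}
  \cost[k\ell]{\sts}
  = \integral{0}{\infty}
      \P{\freezing{i_\ell}{x_\ell} > s}\,
      \dd{s}\costC{s}{\st{i_k}{x_k}}
    \,ds
  = \E{\costC{\freezing{i_\ell}{x_\ell}}{\st{i_k}{x_k}}},
\end{equation*}
where the boundary term vanishes because $\costC{0}{\st{i_k}{x_k}} = 0$ (bypassing is instantaneous at $s = 0$).

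Mirroring the notational abuse already employed in \fref{prop:cost_p}, I interpret $(\serve{k} + \arrive)$ as acting on the single-job BPOP~C cost function $\costC{s}{\st{i_k}{x_k}}$ inside the expectation. The BPOP~C optimality equation \fref{eq:opt_c} then gives the pointwise bound $(\serve{k} + \arrive)\costC{s}{\st{i_k}{x_k}} \geq -1$, and taking expectations immediately yields both the inequality $(\serve{\ell} + \arrive)\cost[k\ell]{\sts} \geq -1$ (by the symmetric version of the integration by parts) and the lower bound $(\serve{k} + \arrive)\cost[k\ell]{\sts} \geq -1$.

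To finish, I need equality for the smaller-SJP job~$k$. By \fref{lem:freezing_fair}, $\freezing{i_\ell}{x_\ell} \geq \frozen{\fair{\st{i_\ell}{x_\ell}}}$ a.s. Because $\load{r}$ is nondecreasing in~$r$ (since \fref{lem:sjp_properties} bounds the derivative of $\profit{\st{a}{0}}{r}$ by~$1$), the map $\frozen{r} = r/(1 - \load{r})$ is strictly increasing, so the hypothesis $\fair{\st{i_k}{x_k}} \leq \fair{\st{i_\ell}{x_\ell}}$ gives $\freezing{i_\ell}{x_\ell} \geq \frozen{\fair{\st{i_k}{x_k}}}$ almost surely. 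The equality claim then reduces to showing that at any bypass time $s \geq \frozen{\fair{\st{i_k}{x_k}}}$, serving is an optimal action for job~$k$ in single-job BPOP~C, so that $(\serve{k} + \arrive)\costC{s}{\st{i_k}{x_k}} = -1$ almost surely.

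The principal obstacle is this last step: transporting the SJP-index threshold from BPOP~A to BPOP~C. For BPOP~A the threshold is $\fair{\st{i_k}{x_k}}$ by \fref{lem:cost_a} and \fref{def:fair}. I would transfer this through the correspondence $\costB{r}{} = \costC{\frozen{r}}{} = \costA{r}{}/(1 - \load{r})$ of \fref{lem:cost_bc}: because BPOP~B's cost function differs from BPOP~A's only by a state-independent positive rescaling, BPOP~B inherits the same optimal-action structure as BPOP~A at each bypass time~$r$; reparameterizing by $s = \frozen{r}$ then carries the BPOP~A threshold $\fair{\st{i_k}{x_k}}$ to the BPOP~C threshold $\frozen{\fair{\st{i_k}{x_k}}}$, completing the argument.
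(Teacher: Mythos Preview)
Your proposal is correct and follows essentially the same route as the paper: integrate by parts against the tail of $\freezing{i_\ell}{x_\ell}$ to express $\cost[k\ell]{\sts}$ as $\E{\costC{\freezing{i_\ell}{x_\ell}}{\st{i_k}{x_k}}}$, apply \fref{eq:opt_c} pointwise for the inequality, and then transport the BPOP~A serve-optimality threshold to BPOP~C via \fref{lem:cost_bc}. The paper carries out that last transfer by the explicit computation $(\serve{k}+\arrive)\costC{\frozen{r}}{\st{i_k}{x_k}} = \tfrac{1}{1-\load{r}}\serve{k}\costA{r}{\st{i_k}{x_k}} + \tfrac{\load{r}}{1-\load{r}}$, whereas you phrase it as ``inheriting optimal-action structure'' under the state-independent rescaling, but the content is identical.
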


\proofin{app:proofs}

Our final lemma is a general claim about the serve-$k$ operator.

\begin{lemma}
  \label{lem:serve_one}
  For any function $\alpha$ assigning
  a cost rate $\alpha(\st{i}{x})$ to each state~$\st{i}{x}$,
  the equation
  $\serve{1}\cost{\st{i}{x}} = -\alpha(\st{i}{x})$
  is satisfied by
  \begin{equation}
    \label{eq:cost_one}
    \cost{\st{i}{x}}
    = \integral{x}{\infty}
        \frac{\tail[i]{y}}{\tail[i]{x}}
        \alpha(\st{i}{y})
      \,dy
      + \sum_{\mathclap{j \neq i}}\mkern3mu
        \integral{0}{\infty}
          q_{ij} \tail[j]{y}
          \alpha(\st{j}{y})
        \,dy,
  \end{equation}
  where $q_{ij}\esub$ is the probability that
  a job starting in stage~$i$ at some point reaches stage~$j$.
\end{lemma}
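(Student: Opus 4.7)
The plan is to verify the formula directly by substituting it into the serve-$1$ operator
\begin{equation*}
  \serve{1}\cost{\st{i}{x}}
  = \dd{x}\cost{\st{i}{x}}
    + \hazard{i}{x}\Bigl(\sum_{j \in \stages}\trans{i}{j}\cost{\st{j}{0}} - \cost{\st{i}{x}}\Bigr).
\end{equation*}
The underlying intuition is that the right-hand side of~\fref{eq:cost_one} is exactly the expected total cost $\E{\integral{0}{T}\alpha(\st{i_t}{x_t})\,dt}$ accrued as a single job evolves from $\st{i}{x}$ to completion: the first term integrates $\alpha$ along the remaining life of stage~$i$, weighted by the conditional survival probability $\tail[i]{y}/\tail[i]{x}$, while the second term accounts for each future stage~$j$, visited starting at age~$0$ with probability~$q_{ij}$, during which the job spends time distributed with tail $\tail[j]{y}$.

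First I would differentiate the right-hand side with respect to~$x$. Only the first summand depends on $x$; rewriting it as $(1/\tail[i]{x})\integral{x}{\infty}\tail[i]{y}\alpha(\st{i}{y})\,dy$ and applying the product rule together with $\hazard{i}{x} = \density[i]{x}/\tail[i]{x}$ gives
\begin{equation*}
  \dd{x}\cost{\st{i}{x}}
  = -\alpha(\st{i}{x})
    + \hazard{i}{x}\cdot\frac{1}{\tail[i]{x}}\integral{x}{\infty}\tail[i]{y}\alpha(\st{i}{y})\,dy,
\end{equation*}
whose second piece is precisely $\hazard{i}{x}$ times the ``stage-$i$ portion'' of $\cost{\st{i}{x}}$.

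Next I would expand $\sum_j\trans{i}{j}\cost{\st{j}{0}}$ by plugging~\fref{eq:cost_one} in at each $\st{j}{0}$ (taking $q_{jj}=1$) and swapping the two summations, obtaining $\sum_k\bigl(\sum_j\trans{i}{j}q_{jk}\bigr)\integral{0}{\infty}\tail[k]{y}\alpha(\st{k}{y})\,dy$. Acyclicity of the stage chain does the key work here: since $i$ is inaccessible from any stage reachable in one step from~$i$, the $k=i$ term vanishes, while a first-step decomposition gives $\sum_j\trans{i}{j}q_{jk} = q_{ik}$ for $k \neq i$. Hence $\sum_j\trans{i}{j}\cost{\st{j}{0}}$ equals $\cost{\st{i}{x}}$ minus its stage-$i$ portion, and so $\hazard{i}{x}\bigl(\sum_j\trans{i}{j}\cost{\st{j}{0}} - \cost{\st{i}{x}}\bigr)$ exactly cancels the residual $\hazard{i}{x}$-term from the derivative computation, leaving $-\alpha(\st{i}{x})$ as claimed.

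The only delicate step is using acyclicity to justify $\sum_j\trans{i}{j}q_{jk} = q_{ik}$ for every~$k$ (in particular to see that the $k=i$ contribution drops out because $j \to i$ transitions do not occur); everything else reduces to routine calculus, so I do not anticipate any further obstacle.
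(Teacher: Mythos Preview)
Your proposal is correct and follows essentially the same route as the paper's proof: both compute $\dd{x}\cost{\st{i}{x}}$ via the product rule to obtain $-\alpha(\st{i}{x})$ plus $\hazard{i}{x}$ times the stage-$i$ integral, and both use the first-step decomposition $q_{ik}=\sum_j \trans{i}{j}q_{jk}$ (with acyclicity killing the $k=i$ term) to show that $\sum_j\trans{i}{j}\cost{\st{j}{0}}-\cost{\st{i}{x}}$ is exactly the negative of that stage-$i$ integral. The only cosmetic difference is the order of the two computations and your explicit adoption of the convention $q_{jj}=1$ to unify the two pieces of $\cost{\st{j}{0}}$.
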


\proofin{app:proofs}

A notable special case of \fref{lem:serve_one}
is that of constant functions $\alpha(\st{i}{x}) = \alpha$,
in which case $\serve{1}\cost{\st{i}{x}} = -\alpha$ is satisfied by
\begin{equation*}
  \cost{\st{i}{x}} = \alpha\E{\totalsize{\st{i}{x}}},
\end{equation*}
because \fref{eq:cost_one} with constant~$\alpha$
is $\alpha$ times the sum of
the expected remaining service times in each state.

\begin{proof}[\proofof{prop:cost_q}]
  First, note that \fref{eq:cost_q_one} and \fref{eq:cost_q_many}
  completely determine $\costQ{}$.
  Specifically, \fref{eq:cost_q_one} implies
  \begin{equation}
    \label{eq:cost_q_init}
    \costQ{\st{a}{0}}
    = \frac{1}{1 - \load{}}
      \sum_{\mathclap{i \in \stages}}\mkern3mu
        \integral{0}{\infty}
          \lambda q_{ai} \tail[i]{x}
          \cost[1 \mkern0.5mu 2]{\st{i}{x}, \st{a}{0}}
        \,dx,
  \end{equation}
  This determines $\costQ{\st{i}{x}}$ by~\fref{eq:cost_q_one},
  which in turn determines $\costQ{\sts}$ by~\fref{eq:cost_q_many}.
  Therefore, it suffices to show that \fref{eq:opt_q} is satisfied
  assuming \fref{eq:cost_q_one} and \fref{eq:cost_q_many} hold.
  By~\fref{eq:cost_q_many},
  \begin{align*}
    \arrive\costQ{\dots}
    &= \lambda(\costQ{\dots, \st{a}{0}} - \costQ{\dots}) \\
    &= \lambda\costQ{\st{a}{0}}
      + \sum_{\mathclap{k \leq n}} \lambda\cost[k(n+1)]{\dots, \st{a}{0}}.
  \end{align*}
  This means that \fref{eq:opt_q} in the case of $\sts = (\st{i}{x})$
  becomes
  \begin{align*}
    \label{eq:opt_q_one}
    \serve{1}\costQ{\st{i}{x}}
    &= -\arrive\costQ{\st{i}{x}} \\
    &= -\lambda\costQ{\st{a}{0}}
      - \lambda\cost[1 \mkern0.5mu 2]{\st{i}{x}, \st{a}{0}},
  \end{align*}
  which holds by \fref{lem:serve_one} applied to \fref{eq:cost_q_one}.
  It remains only to prove \fref{eq:opt_q} in the case of multiple jobs.
  Once more abusing the $\serve{k}$ notation, we compute
  \iftoggle{widecol}{\begin{equation}}{\begin{align}}
    \label{eq:serve_q_many}
    (\serve{k} + \arrive)\costQ{\sts}
    = (\serve{k} + \arrive)\costQ{\st{i_k}{x_k}}
      + \sum_{\mathclap{l \neq k}}(\serve{k} + \arrive)\cost[k\ell]{\sts}.
  \iftoggle{widecol}{\end{equation}}{\end{align}}
  By \fref{eq:opt_q_one},
  we know $(\serve{k} + \arrive)\costQ{\st{i_k}{x_k}} = 0$,
  and by \fref{lem:opt_interference},
  we know $(\serve{k} + \arrive)\cost[k\ell]{\sts} \geq -1$.
  Combined with~\fref{eq:serve_q_many}, this yields
  \begin{equation*}
    (\serve{k} + \arrive)\costQ{\sts} \geq -(n - 1).
  \end{equation*}
  To prove \fref{eq:opt_q},
  it remains only to show that equality holds above for some job~$k$.
  By \fref{lem:opt_interference},
  this is the case when $k$ is the job of minimal SJP index,
  as then $(\serve{k} + \arrive)\cost[k\ell]{\sts} = -1$ for all $\ell \neq k$.
\end{proof}

\begin{proof}[\proofof{thm:queueing_time}]
  The desired expression for $\E{\Tqueue}$
  is immediate from \fref{eq:renewal_reward} and~\fref{eq:cost_q_init}.
\end{proof}

\section{Practical Takeaways for Scheduling Multistage Jobs}
\label{sec:takeaways}

We have thus far shown how to compute the Gittins index of a multistage job
(\fref{sec:composition})
and analyzed the performance of the resulting Gittins policy
(\fref{sec:analysis}).
In this section we take a step back and consider
the broader implications of this work for system designers.

\subsection{Rough Guideline: Prioritize Variable Stages}
\label{sub:prioritize_variable}

Even for single-stage jobs,
the exact Gittins policy is impossible to pithily describe.
At best, one can summarize it with a rough guideline:
prioritize jobs that have a chance of being short.
The story is even more complicated for multistage jobs.
Fortunately, we can give another guideline for this case:
given two multistage jobs of similar size distributions,
\emph{prioritize the job which frontloads variability}.
We can use \fref{thm:composition} to rigorously prove
a specific illustrating case of this guideline.

\begin{proposition}
  \label{prop:stochastic_first}
  Let $\single{i}$ be a single-stage job with deterministic size
  $\size{i} = d$.
  Then for any job~$\J$,
  the Gittins policy prioritizes
  $\J \seq \single{i}$ over $\single{i} \seq \J$ because
  \begin{equation*}
    \fair{\J \seq \single{i}}
    = \fair{\J} + d
    \leq \fair{\single{i} \seq \J}.
  \end{equation*}
\end{proposition}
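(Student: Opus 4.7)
The plan is to prove both the equality and the inequality by direct computation with the composition law (\fref{thm:composition}) applied to the two sequential compositions, then invoking the Lipschitz-type bound from \fref{lem:sjp_properties}\fref{item:bounded_derivative}.

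First I would compute the SJP function of the deterministic single-stage job $\single{i}$. For reward~$r$, the only sensible stopping policies are to serve the entire deterministic block of length~$d$ or to give up immediately, yielding
\begin{equation*}
  \profit{\single{i}}{r} = \max\{0, r - d\}.
\end{equation*}
In particular $\fair{\single{i}} = d$, and $\profit{\single{i}}{}$ is invertible (on $[d, \infty)$) as $\profitinv{\single{i}}{u} = u + d$.

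Next I would apply \fref{thm:composition} in both directions. For $\J \seq \single{i}$, the law gives
\begin{equation*}
  \profit{\J \seq \single{i}}{r}
  = \profit{\J}{\profit{\single{i}}{r}}
  = \profit{\J}{\max\{0, r - d\}},
\end{equation*}
which vanishes precisely when $\max\{0, r - d\} \leq \fair{\J}$, i.e.\ $r \leq \fair{\J} + d$. Hence $\fair{\J \seq \single{i}} = \fair{\J} + d$. For $\single{i} \seq \J$, the law gives
\begin{equation*}
  \profit{\single{i} \seq \J}{r}
  = \profit{\single{i}}{\profit{\J}{r}}
  = \max\{0, \profit{\J}{r} - d\},
\end{equation*}
which vanishes precisely when $\profit{\J}{r} \leq d$, giving $\fair{\single{i} \seq \J} = \profitinv{\J}{d}$.

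All that remains is the inequality $\fair{\J} + d \leq \profitinv{\J}{d}$, equivalently $\profit{\J}{\fair{\J} + d} \leq d$. This is the only step that isn't pure algebra, but it is immediate from \fref{lem:sjp_properties}: $\profit{\J}{}$ has derivative at most~$1$ almost everywhere, and $\profit{\J}{\fair{\J}} = 0$ by \fref{def:fair}, so integrating the derivative bound from $\fair{\J}$ to $\fair{\J} + d$ yields $\profit{\J}{\fair{\J} + d} \leq d$. I don't anticipate any real obstacle here; the content of the proposition is really a clean consequence of the composition law together with the Lipschitz bound on SJP functions, and the whole argument should fit in a few lines.
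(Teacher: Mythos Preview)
Your proposal is correct and follows essentially the same approach as the paper: compute $\profit{\single{i}}{r} = \max\{0, r - d\}$, apply the composition law in both orders to obtain $\fair{\J \seq \single{i}} = \fair{\J} + d$ and $\fair{\single{i} \seq \J} = \profitinv{\J}{d}$, and then use the Lipschitz bound from \fref{lem:sjp_properties} to conclude $\profit{\J}{\fair{\J} + d} \leq d$. (In fact, the paper's printed proof swaps the labels of the two compositions; your assignment of $\fair{\J} + d$ to $\J \seq \single{i}$ and $\profitinv{\J}{d}$ to $\single{i} \seq \J$ is the correct one.)
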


\proofin{app:proofs}

In fact, the result generalizes to the case where $\size{i}$ has the so-called
\emph{new better than used in expectation} (NBUE) property
\citep{m/g/1_gittins_aalto}.

Consider jobs $\jt{A}$, $\jt{B}$, and $\jt{C}$
from \fref{fig:multistage_difference}.
By \fref{prop:stochastic_first},
we should prioritize $\jt{B}$ first, followed by $\jt{C}$ and then~$\jt{A}$.
Intuitively, we prefer $\jt{B}$ because having the stochastic stage earlier
means that we learn information about the job's total size earlier.

\subsection{Impact of Exploiting Multistage Structure}
\label{sub:impact}

It is possible to schedule multistage jobs
by using a ``blind'' algorithm that ignores the multistage structure.
That is, a blind policy makes scheduling decisions
using only each job's \emph{total age},
the total amount of service the job has received so far.
We might wonder:
are blind policies ``good enough'' to make it
not worth worrying about multistage jobs?
We show below that the answer is at least sometimes no:
exploiting multistage structure can
\emph{significantly decrease mean response time}
compared to the best blind policy.

We consider three different scheduling policies in this section:
\begin{itemize}
\item
  \emph{First-come, first-served} (FCFS) ignores multistage structure:
  it serves jobs to completion in the order they arrive.
\item
  The \emph{blind Gittins policy} (BGP) ignores multistage structure:
  it computes each job's Gittins index based on only its total age
  and always serves the job of maximal Gittins index.
\item
  Our policy, the \emph{multistage Gittins policy} (MGP),
  exploits multistage structure:
  it computes each job's Gittins index based on
  its current stage and age within that stage
  and always serves the job of maximal Gittins index.
  This computation is enabled by \fref{alg:gittins}.
\end{itemize}
Figures~\ref{fig:q} and~\ref{fig:r},
explained below,
show two examples in which MGP is far superior to BGP
with respect to mean response time.

\begin{figure}
  \iftoggle{widecol}{\begin{minipage}[t]{0.48\linewidth}}{}
  \centering
  \includegraphics[width=\iftoggle{widecol}{}{0.8}\linewidth]{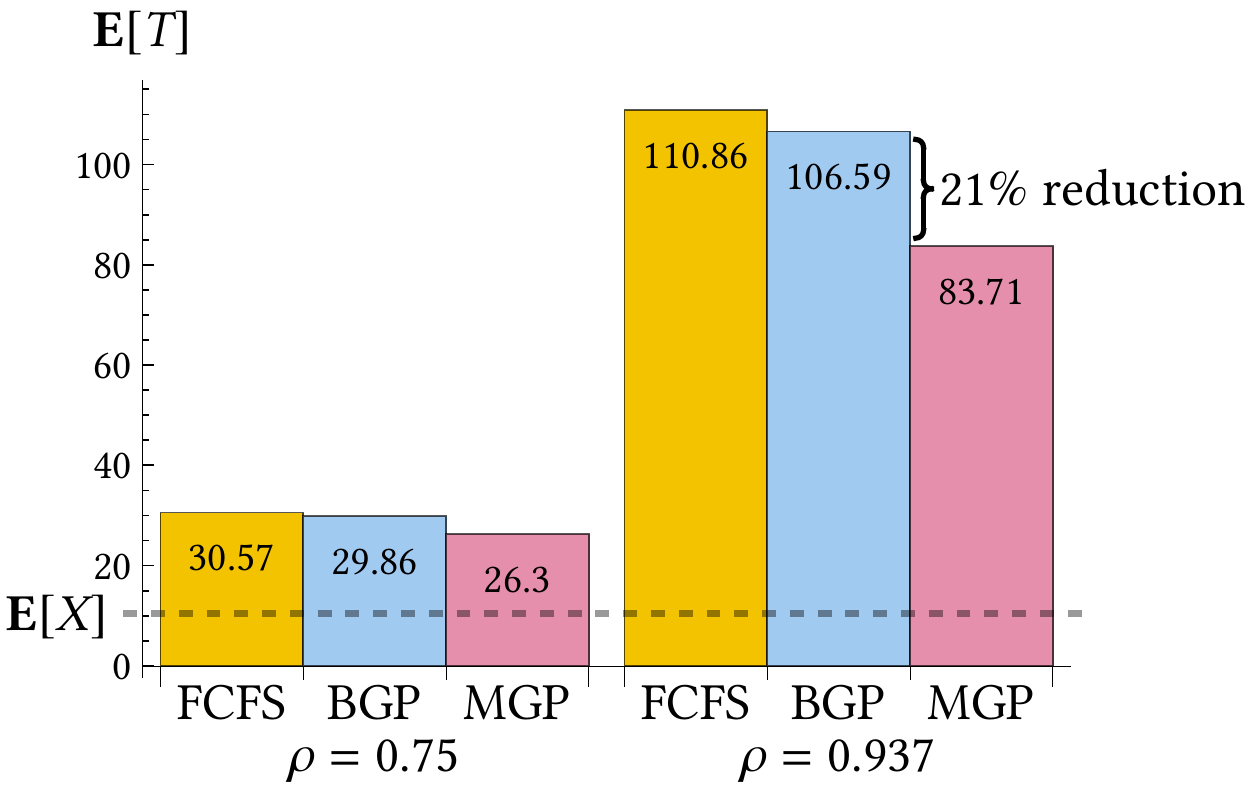}
  \squish

  \caption{Impact of Known Stage Order}
  \label{fig:q}
\iftoggle{widecol}{%
  \end{minipage}\hfill\begin{minipage}[t]{0.48\linewidth}}{%
  \end{figure}\begin{figure}}
  \centering
  \includegraphics[width=\iftoggle{widecol}{}{0.8}\linewidth]{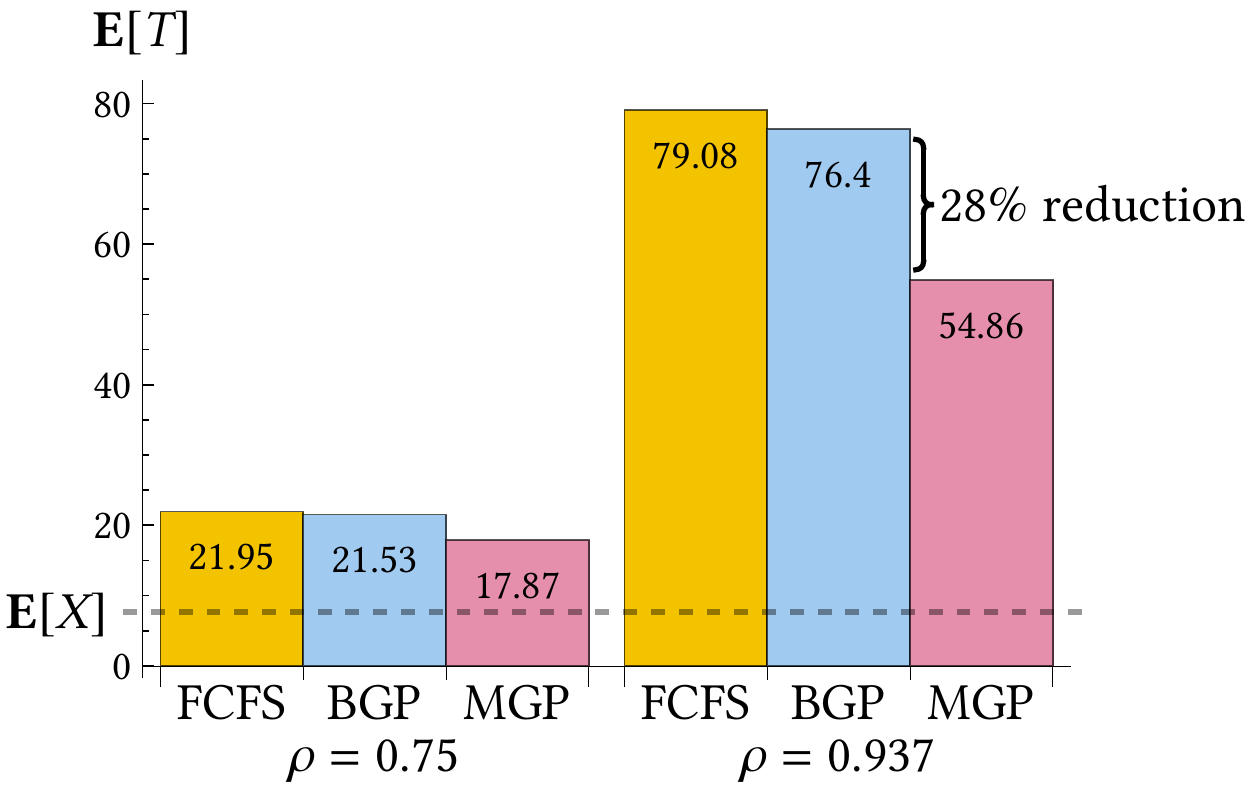}
  \squish

  \caption{Impact of Early Size Information}
  \label{fig:r}
  \iftoggle{widecol}{\end{minipage}}{}
\end{figure}

The first system demonstrates the benefit of
knowing the order of a job's stages.
We assume jobs are a mixture of types $\jt{A}$, $\jt{B}$, and $\jt{C}$
from \fref{fig:multistage_difference},
each with equal probability.
Each job has two stages of deterministic size $d = 2$
and one stage of stochastic size~$S$,
which is chosen uniformly from $\{1, 12\}$.
The different job types correspond to
different orderings of these three stages.
\begin{itemize}
\item
  BGP sees every job as having size distribution $S + 2d$.
\item
  MGP differentiates jobs based on stage order,
  which guides its scheduling decisions
  as shown in \fref{sub:prioritize_variable}.
\end{itemize}
\fref{fig:q} shows MGP reduces mean response time compared to BGP
by 12\% at moderate load and 21\% at high load.

The second system demonstrates the benefit of
learning information about a job's size early.
The jobs have type~$\jt{R}$ from \fref{fig:multistage_examples}:
they have a diagnosis stage of size $D = 1$
followed by either an easy repair stage of size $R_{\text{easy}} = 4$
or hard repair stage of size $R_{\text{hard}} = 12$.
Repair is easy with probability $p = 2/3$.
\begin{itemize}
\item
  BGP does not directly learn whether a job is easy or hard to repair,
  though it can infer it is hard once a job reaches age~$5$,
  at which point it gives the job lower priority
  than jobs which have not been run yet.
\item
  MGP learns whether a job has easy or hard repair
  immediately after diagnosis,
  allowing it to give jobs with hard repairs lower priority sooner.
\end{itemize}
\fref{fig:r} shows MGP reduces mean response time compared to BGP
by 17\% at moderate load and 28\% at high load.

\section{Conclusion}

In this paper, we introduce \emph{multistage jobs},
a tool for modeling scheduling problems in which
the scheduler has partial but incomplete information
about each job's remaining size.
The multistage job model is more general than
the standard M/G/1 model with unknown job sizes.
The optimal scheduling algorithm for the standard M/G/1,
namely the Gittins policy,
is complex but tractable.
Using a new formulation of the Gittins index
called \emph{single-job profit} (SJP),
we are able to reduce the task of optimally scheduling multistage jobs
to the task of solving SJP for the job's individual stages.
Finally, we leverage SJP to provide a \emph{closed-form analysis}
of the Gittins policy for multistage jobs,
which we use to demonstrate the importance of exploiting multistage structure.

\bibliographystyle{ACM-Reference-Format}
\bibliography{refs}

\appendix

\section{Formal Definitions of Job Type Operations}
\label{app:definitions}

We write $\sqcup$ for disjoint union
and let $\stages[\J]' = \stages[\J] \setminus \{z\}$.

\begin{definition}
  The \emph{state-conditioned job type} for job type~$\J$ and state~$\st{i}{x}$,
  denoted $\st[\J]{i}{x}$,
  is the job type obtained by starting a type~$\J$ job in state $\st{i}{x}$.
  Formally, letting $i'$ be a new stage label, we define
  \begin{align*}
    \stages[{\st[\J]{i}{x}}] &= (\stages[\J] \setminus \{i\}) \sqcup \{i'\} \\
    \init[{\st[\J]{i}{x}}] &= i' \\
    \fin[{\st[\J]{i}{x}}] &= \fin[\J] \\
    \trans[{\st[\J]{i}{x}}]{j}{k} &=
    \begin{dcases}
      \trans[\J]{i}{k} & \text{if } j = i' \\
      \trans[\J]{j}{k} & \text{otherwise} \\
    \end{dcases} \\
    \tail[i']{t} &= \frac{\tail[i]{x + t}}{\tail[i]{x}}.
  \end{align*}
\end{definition}

\begin{definition}
  The \emph{sequential composition} of job types $\J$ and~$\K$,
  denoted $\J \circ \K$,
  is the job type obtained by ``stitching together'' $\J$ and $\K$.
  Formally, we define
  \begin{align*}
    \stages[\J \seq \K] &= \stages[\J]' \sqcup \stages[\K] \\
    \init[\J \seq \K] &= \init[\J] \\
    \fin[\J \seq \K] &= \fin[\K] \\
    \trans[\J \seq \K]{i}{j} &=
    \begin{dcases}
      \trans[\J]{i}{j} & \text{if } i, j \in \stages[\J]' \\
      \trans[\J]{i}{\fin[\J]}
      & \text{if } i \in \stages[\J]' \text{ and } j = \init[\K] \\
      \trans[\K]{i}{j}
      & \text{if } i, j \in \stages[\K] \\
      0 & \text{otherwise.}
    \end{dcases}
  \end{align*}
\end{definition}

\begin{definition}
  The \emph{single-stage job type} with stage~$i$, denoted~$\single{i}$,
  is the job type with just one stage~$i$ of size~$\size{i}$.
  Formally, we define
  $\stages[\single{i}] = \{\init[\single{i}]\} = \{i\}$
  and $\trans[\single{i}]{i}{\fin} = 1$.
\end{definition}

\begin{definition}
  Let $\mc{L}$ be a finite set.
  The \emph{mixture composition} of job types $\{\J_\ell\}_{\ell \in \mc{L}}$
  with probabilities $\{q_\ell\}_{\ell \in \mc{L}}$
  is the job type representing a job with a randomly chosen type,
  choosing $\J_\ell$ with probability~$q_\ell$.
  We denote the mixture composition by
  $\{\J_\ell \text{ w.p. } q_\ell\}_{\ell \in \mc{L}}$,
  abbreviating to $\J_{\mc{L}}$
  when the probabilities are unambiguous.
  Formally, letting $\init'$ and $\fin$ be new stage labels, we define
  \begin{align*}
    \stages[\J_{\mc{L}}]
    &= \{\init', \fin'\}
      \sqcup \bigsqcup_{\mathclap{\ell \in \mc{L}}} \stages[\J_\ell]' \\
    \init[\J_{\mc{L}}] &= \init' \\
    \fin[\J_{\mc{L}}] &= \fin' \\
    \trans[\J_{\mc{L}}]{i}{j} &=
    \begin{dcases}
      q_\ell
      & \text{if } i = \init' \text{ and } j = \init[\J_\ell] \\
      \trans[\J_\ell]{i}{\fin[\J_\ell]}
      & \text{if } i \in \stages[\J_\ell]'
        \text{ for some } \ell \in \mc{L}
        \text{ and } j = \fin' \\
      \trans[\J_\ell]{i}{j}
      & \text{if } i, j \in \stages[\J_\ell]'
        \text{ for some } \ell \in \mc{L} \\
      0 & \text{otherwise}
    \end{dcases} \\
    \size{\init'} &= 0.
  \end{align*}
\end{definition}

\section{Algorithm Details for Discrete Stage Size Distributions}
\label{app:discrete}

To finish the algorithm given in \fref{sub:discrete},
it remains to
\begin{itemize}
\item
  compute $\profit{\single{i}}{}$ in $O(|X_i|)$ time and
\item
  compose
  $\profit{\K}{} \circ \profit{\single{i}}{} \circ \profit{\L}{}$
  in $O(|\K| + |X_i| + |\L|)$ time.
\end{itemize}
Both of these are made possible by representing the SJP functions
by linked lists.
This is possible because a job with finite state space
has finitely many stopping policies~$\pi$,
so by \fref{def:sjp},
the SJP function of a job with finite state space
is the maximum of finitely many linear functions.
Specifically, given
\begin{align*}
  0 = s_0 \leq s_1 &\leq \dots \leq s_n \\
  0 = \sigma_0 \leq \sigma_1 &\leq \dots \leq \sigma_n = 1,
\end{align*}
the list $((s_1, \sigma_1), \dots, (s_n, \sigma_n))$ represents
the following function,
which has $V(0) = 0$ and slope $\sigma_k$ over the interval $(s_k, s_{k + 1})$:
\begin{equation*}
  V(r) =
  \begin{dcases}
    0 & r < s_1 \\
    \dots \\
    \sigma_k(r - s_k)
    + \sum_{\ell = 1}^{\mathclap{k - 1}} \sigma_\ell(s_{\ell + 1} - s_\ell)
    & s_k \leq r < s_{k + 1} \\
    \dots \\
    r - s_n
    + \sum_{\ell = 1}^{\mathclap{n - 1}} \sigma_\ell(s_{\ell + 1} - s_\ell)
    & r > s_n.
  \end{dcases}
\end{equation*}

With this representation,
we can compute the composition of two functions
by traversing the two lists.
Each node of one of the input lists transforms to become
a node of the output list,
with the ordering given by interleaving the input lists.
Both the transformation and interleaving can be determined
in a manner similar to the ``merge'' step of merge sort:
we examine the first items of both lists,
choose one of them to transform,
add the transformed item to the output list,
remove the original item from its input list,
and repeat.
This interleaving takes linear time, as desired.

It remains only to compute a single-stage profit function in linear time.
Consider the single-stage job $\single{i}$ with size distribution
\begin{equation*}
  \size{i} =
  \begin{dcases}
    x_1 & \text{with probability } p_1 \\
    \dots \\
    x_n & \text{with probability } p_n.
  \end{dcases}
\end{equation*}
It is convenient to set $x_0 = 0$ and $p_0 = 0$,
and we assume for ease of exposition that $x_1 > 0$.
We begin by precomputing
\begin{align*}
  P_k
  &= \P{\size{i} > x_k}
  = 1 - \sum_{\ell = 0}^{\mathclap{k - 1}} p_\ell \\
  E_k
  &= \E{\min\{\size{i}, x_k\}}
  = \sum_{\ell = 0}^{\mathclap{k - 1}} P_\ell(x_{\ell + 1} - x_\ell)
\end{align*}
for $0 \leq k \leq n$, which takes $O(n)$ time.
To clarify, $P_0 = 0$ and $E_0 = 0$.
For brevity,
we write $\fair{k} = \fair{\st[\single{i}]{i}{x_\ell}}$.
In this finite-support case,
\fref{def:gittins} and \fref{prop:gittins_fair} imply
\begin{equation*}
  \fair{k} = \min_{\mathclap{m > k}} \frac{E_m - E_k}{P_m - P_k}.
\end{equation*}
We denote the minimizing~$m$ by~$m(k)$.
For any $k < \ell$,
it is straightforward to show that the following conditions are equivalent:
\begin{itemize}
\item
  $\fair{k} \leq \fair{\ell}$,
\item
  $m(k) \leq \ell$, and
\item
  $(E_\ell - E_k)/(P_\ell - P_k) \leq \fair{\ell}$.
\end{itemize}
The third inequality means that if we know $\fair{\ell}$,
we can obtain the results of the first two comparisons in $O(1)$ time.

Imagine that $1, \dots, n$ are people standing in a line looking to the right,
and suppose that person~$k$ has height~$\fair{k}$.
We say that person~$k$ \emph{sees} person $\ell > k$
if $\fair{m} \leq \fair{k} \leq \fair{\ell}$
for all $k < m < \ell$.
That is, $k$ sees $\ell$ if $\ell$ is taller than~$k$
and there is nobody even taller in the way.

Starting with the base case $\fair{n - 1} = x_n - x_{n - 1}$,
we now compute $\fair{k}$ for $k = n - 1, \dots, 0$.
We maintain a stack such that after we have computed $\fair{k + 1}$,
the stack contains each person~$\ell$ that $k + 1$ can see
in ascending order.
To compute $\fair{k}$,
we pop people off the stack
until $k$ can see the first person~$\ell$ on the stack,
at which point we push $k$ onto the stack.
We have defined seeing such that $m(k) = \ell$,
from which we obtain $\fair{k}$.
It is simple to see that we maintain the invariant.
Each person can only be pushed onto and popped off of the stack once,
so this entire process takes $O(n)$ time.
Given $\fair{0}, \dots, \fair{n - 1}$,
we can obtain $\profit{0}$ in $O(n)$ time
using the approach described at the end of \fref{sub:single-stage}.

\section{Deferred Proofs}
\label{app:proofs}

\begin{proof}[\proofof{lem:cost_bc}]
  Taking as given that
  \begin{equation*}
    \costB{r}{\sts}
    = \costC{\frozen{r}}{\sts}
    = \frac{\costA{r}{\sts}}{1 - \load{r}},
  \end{equation*}
  we show
  \begin{itemize}
  \item
    $\costB{r}{}$ satisfies the optimality equation \fref{eq:opt_b} and
  \item
    $\costC{\frozen{r}}{}$ satisfies
    the optimality equation \fref{eq:opt_c},
    using $\frozen{r}$ as the expected bypass time.
  \end{itemize}

  We start by verifying \fref{eq:opt_b}.
  Using \fref{lem:cost_a}, we compute
  \begin{align}
    \label{eq:arrive_b}
    \arrive\costB{r}{\dots}
    &= \frac{\arrive\costA{r}{\dots}}{1 - \load{r}} \\
    &= \frac{%
        \lambda(\costA{r}{\dots, \st{a}{0}} - \costA{r}{\dots})}{%
        1 - \load{r}} \\
    &= \frac{\load{r}}{1 - \load{r}}.
  \end{align}
  Because $\costA{r}{}$ satisfies \fref{eq:opt_a},
  by pulling out the common $\arrive\costB{r}{\sts}$ terms below,
  we obtain
  \begin{align*}
    \iftoggle{widecol}{}{&}
    \min_k \min\biggl\{
    \begin{matrix}
      (\serve{k} + \arrive)\costB{r}{\sts} \\
      (\bypass{r}{k} + \arrive)\costB{r}{\sts}
    \end{matrix}
    \biggr\}
    \iftoggle{widecol}{&}{\\ &\qquad}
    = \min_k \min\biggl\{
    \begin{matrix}
      \serve{k}\costB{r}{\sts} \\
      \bypass{r}{k}\costB{r}{\sts}
    \end{matrix}
    \biggr\} + \frac{\load{r}}{1 - \load{r}}
    \iftoggle{widecol}{\\ &}{\\ &\qquad}
    = \frac{1}{1 - \load{r}} \min_k \min\biggl\{
    \begin{matrix}
      \serve{k}\costA{r}{\sts} \\
      \bypass{r}{k}\costA{r}{\sts}
    \end{matrix}
    \biggr\} + \frac{\load{r}}{1 - \load{r}}
    \iftoggle{widecol}{\\ &}{\\ &\qquad}
    = -1,
  \end{align*}
  which, as desired, is \fref{eq:opt_b}.

  We now turn to verifying \fref{eq:opt_c}.
  Because $\costB{r}{\sts}$ satisfies \fref{eq:opt_b},
  we know $\costC{\frozen{r}}{\sts}$ satisfies the ``top branch''
  of \fref{eq:opt_c}.
  It thus suffices to show the ``bottom branch''
  \begin{equation*}
    \bypass{\frozen{r}}{k}\costB{\frozen{r}}{\sts} \geq -1,
  \end{equation*}
  with equality if $(\bypass{r}{k} + \arrive)\costB{\frozen{r}}{\sts} = -1$.
  Recalling~\fref{eq:arrive_b}, we compute
  \begin{align*}
    \iftoggle{widecol}{}{&}
    \bypass{\frozen{r}}{k}\costB{r}{\sts}
    \iftoggle{widecol}{&}{\\ &\qquad}
    = \bypass{}{k}\biggl[\frac{r}{1 - \load{r}}\biggr]
      \costB{r}{\sts}
    \iftoggle{widecol}{\\ &}{\\ &\qquad}
    = \frac{1 - \load{r}}{r}
      (\costB{r}{\dots, \st{\fin}{0}, \dots}
       - \costB{r}{\dots, \st{i_k}{x_k}, \dots})
    \iftoggle{widecol}{\\ &}{\\ &\qquad}
    = (1 - \load{r})\bypass{r}{k}\costB{r}{\sts}
    \iftoggle{widecol}{\\ &}{\\ &\qquad}
    = (1 - \load{r})(\bypass{r}{k} + \arrive)\costB{r}{\sts} - \load{r}.
    \iftoggle{widecol}{\\ &}{\\ &\qquad}
    \geq -1,
  \end{align*}
  with equality if $(\bypass{r}{k} + \arrive)\costB{r}{\sts} = -1$, as desired.
\end{proof}

\begin{proof}[\proofof{lem:freezing_defined}]
  We follow much the same argument as for \fref{lem:sjp_properties}.
  It is clear that $\costC{0}{\st{i}{x}} = 0$,
  so it suffices to show that $\costC{s}{\st{i}{x}}$ is
  \begin{enumerate}[(i)]
  \item
    \label{item:cost_c_concave_nonincreasing}
    concave and nondecreasing in~$s$,
    which implies decreasing derivative bounded below by~$0$;
    and
  \item
    \label{item:cost_c_bounded_above}
    bounded above by $s$,
    which implies derivative bounded above by~$1$.
  \end{enumerate}
  Given a fixed policy for BPOP~C, the expected cost is
  \begin{equation*}
    s\E{\text{number of bypassed jobs}} + \E{\text{time spent on all jobs}},
  \end{equation*}
  which is concave and nonincreasing in~$s$.
  The cost $\costC{s}{\st{i}{x}}$ is the infimum of all such functions,
  implying claim~\fref{item:cost_c_concave_nonincreasing}.
  A possible policy is to bypass the job immediately,
  which has expected cost~$s$,
  implying claim~\fref{item:cost_c_bounded_above}.
\end{proof}

\begin{proof}[\proofof{lem:freezing_fair}]
  By \fref{lem:cost_bc}, if $r \leq \fair{\st{i}{x}}$,
  \begin{align*}
    \costC{\frozen{r}}{\st{i}{x}}
    &= \frac{\costA{r}{\st{i}{x}}}{1 - \load{r}} \\
    &= \frac{r}{1 - \load{r}} \\
    &= \frozen{r}.
  \end{align*}
  This means that for all $s < \frozen{\fair{\st{i}{x}}}$,
  \begin{equation*}
    \P{\freezing{i}{x} > s}
    = \dd{s} \costC{s}{\st{i}{x}}
    = 1,
  \end{equation*}
  so $\P{\freezing{i}{x} \geq \frozen{\fair{\st{i}{x}}}} = 1$.
\end{proof}

\begin{proof}[\proofof{lem:opt_interference}]
  By the same reasoning as~\fref{eq:integration_by_parts},
  \begin{equation*}
    \cost[k\ell]{\sts} = \E{\costC{\freezing{i_\ell}{x_\ell}}{\st{i_k}{x_k}}}.
  \end{equation*}
  Because $\costC{s}{\sts}$ satisfies \fref{eq:opt_c},
  \begin{equation}
    \label{eq:serve_k_interference}
    (\serve{k} + \arrive)\cost[k\ell]{\sts}
    = \E{(\serve{k} + \arrive)\costC{\freezing{i_\ell}{x_\ell}}{\st{i_k}{x_k}}}
    \geq -1,
  \end{equation}
  where we continue to slightly abuse the $\serve{k}$ notation.
  The same argument holds with $k$ and $\ell$ swapped.
  It remains only to show that \fref{eq:serve_k_interference}
  is actually an equality,
  which requires the
  $\fair{\st{i_k}{x_k}} \leq \fair{\st{i_\ell}{x_\ell}}$ hypothesis.
  By \fref{lem:freezing_fair}, we have
  $\P{\frozen{\fair{\st{i_k}{x_k}}} \leq \freezing{i_\ell}{x_\ell}} = 1$.
  Thus, to prove \fref{eq:serve_k_interference} is an equality,
  it suffices to show that for any $r \geq \fair{\st{i_k}{x_k}}$,
  \begin{equation*}
    (\serve{k} + \arrive)\costC{\frozen{r}}{\st{i_k}{x_k}} = -1,
  \end{equation*}
  because $\frozen{r}$ for $r \geq \fair{\st{i_k}{x_k}}$
  covers all possible values of $\freezing{i_\ell}{x_\ell}$.
  Using \fref{eq:arrive_b} and \fref{lem:cost_bc},
  we compute
  \begin{align*}
    \iftoggle{widecol}{}{&}
    (\serve{k} + \arrive)\costC{\frozen{r}}{\st{i_k}{x_k}}
    \iftoggle{widecol}{&}{\\ &\qquad}
    = (\serve{k} + \arrive)\costB{r}{\st{i_k}{x_k}}
    \iftoggle{widecol}{\\ &}{\\ &\qquad}
    = \serve{k}\costB{r}{\st{i_k}{x_k}} + \frac{\load{r}}{1 - \load{r}}
    \iftoggle{widecol}{\\ &}{\\ &\qquad}
    = \frac{1}{1 - \load{r}}\serve{k}\costA{r}{\st{i_k}{x_k}}
      + \frac{\load{r}}{1 - \load{r}}.
  \end{align*}
  Because $r \geq \fair{\st{i_k}{x_k}}$,
  it is optimal to serve job~$k$ in BPOP~A with expected bypass time~$r$.
  This means $\serve{k}\costA{r}{\st{i_k}{x_k}} = -1$,
  which completes the computation as desired.
\end{proof}

\begin{proof}[\proofof{lem:serve_one}]
  For all stages $j \neq i$,
  the probability $q_{ij}$ satisfies
  $q_{ij} = \sum_{k \in \stages}\trans{i}{k}q_{kj}$,
  which means
  \begin{equation}
    \label{eq:cost_stage_diff}
    \sum_{j \neq i}\trans{i}{j}\cost{\st{j}{0}} - \cost{\st{i}{x}}
    = -\integral{x}{\infty}
        \frac{\tail[i]{y}}{\tail[i]{x}}
        \alpha(\st{i}{y})
      \,dy.
  \end{equation}
  The second term of $\cost{\st{i}{x}}$ is constant in~$x$,
  so using the Leibniz integral rule, we compute
  \begin{align*}
    \iftoggle{widecol}{}{&}
    \dd{x}\cost{\st{i}{x}}
    \iftoggle{widecol}{&}{\\ &\qquad}
    = \dd{x} \integral{x}{\infty}
        \frac{\tail[i]{y}}{\tail[i]{x}}
        \alpha(\st{i}{y})
      \,dy
    \iftoggle{widecol}{\\ &}{\\ &\qquad}
    = \frac{\density[i]{x}}{(\tail[i]{x})^2}\integral{x}{\infty}
        \tail[i]{y}
        \alpha(\st{i}{y})
      \,dy
      + \frac{1}{\tail[i]{x}} \nudgec \dd{x}\integral{x}{\infty}
        \tail[i]{y}
        \alpha(\st{i}{y})
      \,dy
    \iftoggle{widecol}{\\ &}{\\ &\qquad}
    = \hazard{i}{x}\integral{x}{\infty}
        \frac{\tail[i]{y}}{\tail[i]{x}}
        \alpha(\st{i}{y})
      \,dy
      - \alpha(\st{i}{x}),
  \end{align*}
  which with \fref{eq:cost_stage_diff}
  yields $\serve{1}\cost{\st{i}{x}} = - \alpha(\st{i}{x})$,
  as desired.
\end{proof}

\begin{proof}[\proofof{prop:stochastic_first}]
  Clearly $\profit{\single{i}}{r} = \max\{0, r - d\}$
  and $\fair{\single{i}} = d$,
  so by \fref{def:fair} and \fref{thm:composition},
  \begin{align*}
    \fair{\J \seq \single{i}} &= \profitinv{\J}{d} \\
    \fair{\single{i} \seq \J} &= \fair{\J} + d.
  \end{align*}
  \Fref{lem:sjp_properties} implies
  \begin{equation*}
    \profit{\J}{\fair{\single{i} \seq \J}}
    \leq d
    = \profit{\J}{\fair{\J \seq \single{i}}}.
  \end{equation*}
  Again by \fref{lem:sjp_properties},
  $\profit{\J}{r}$ is strictly increasing for $r \geq \fair{\J}$,
  which means $\fair{\single{i} \seq \J} \leq \fair{\J \seq \single{i}}$.
\end{proof}

\section{Spurious Solutions to Optimality Equations}
\label{app:spurious}

Before addressing the specific optimality equations in \fref{sub:optimality},
we give a simple example which illustrates
the main source of spurious solutions and our method for ruling them out.

\subsection{Warmup: Computing Expectation with Dynamic Programming}

Consider a single-stage job with continuous size distribution~$\size{}$,
which has density, tail, and hazard rate functions
$\density{}$, $\tail{}$, and $\hazard{}{}$, respectively.
As usual, we assume $\E{X}$ is finite.
For ease of exposition, we assume $\size{}$ has unbounded support,
but the arguments can be easily modified to handle the bounded case.

Let $\cost{x} = \E{\size{} - x \given \size{} > x}$
be the job's expected remaining size at age~$x$.
We can think of $\cost{}$ as the cost function of the dynamic program
with a single action, namely serving the job,
that incurs cost at continuous rate~$1$ until the job completes.
The optimality equation for this dynamic program is,
by analogy with \fref{def:operators},
\begin{equation}
  \label{eq:opt_trivial}
  \dd{x}\cost{x} + \hazard{}{x}(\cost{{}} - \cost{x}) = -1,
\end{equation}
where $\cost{{}} = 0$.
Solving this equation yields a family of solutions parametrized by $c \in \R$:
\begin{equation*}
  \cost[c]{x}
  = \frac{c}{\tail{x}}
    + \integral{x}{\infty} \frac{\tail{y}}{\tail{x}} \,dy.
\end{equation*}

The true cost function is~$\cost[0]{}$,
so we need a criterion that rules out solutions with $c \neq 0$.
In this case, a sufficient condition is
\begin{equation}
  \label{eq:not_so_fast}
  \lim_{\mathclap{x \to \infty}} \tail{x}|\cost{x}| = 0.
\end{equation}
To see why \fref{eq:not_so_fast} suffices,
observe that because $\E{X}$ is finite,
\begin{equation*}
  \lim_{\mathclap{x \to \infty}} \tail{x}|\cost[0]{x}|
  = \lim_{\mathclap{x \to \infty}}\ \ \integral{x}{\infty} \tail{y} \,dy
  = 0,
\end{equation*}
so $\lim_{x \to \infty} \tail{x}|\cost[c]{x}| = |c|$.
Therefore,
to show that $\cost[0]{}$ is the true cost function,
it suffices to show that the true cost function
must satisfy \fref{eq:not_so_fast},
which we write more succinctly as $\cost{x} = o(1/\tail{x})$.

For the trivial dynamic program in this example,
the simplest way to show $\cost{x} = o(1/\tail{x})$
is to compute the expected remaining size directly,
which removes the need to solve the optimality equation at all.
However, as we will soon see,
we can give asymptotic bounds similar to $o(1/\tail{x})$ for BPOP~Q,
even though directly computing its cost function~$\costQ{}$ is intractable.

\subsection{Ruling Out Spurious Solutions for BPOP~Q}

There are two steps to ruling out spurious solutions to \fref{eq:opt_q}.
We first show that $\costQ{}$ has a ``quadratic'' form,
meaning
\begin{equation}
  \label{eq:quadratic}
  \costQ{\sts}
  = \sum_k \cost[1]{\st{i_k}{x_k}}
    + \sum_{\mathclap{k < \ell}} \cost[2]{\st{i_k}{x_k}, \st{i_\ell}{x_\ell}}
\end{equation}
for some functions $\cost[1]{}$ and~$\cost[2]{}$.
We compute $\cost[1]{}$ in terms of $\cost[2]{}$
and asymptotically bound $\cost[2]{}$ similarly to \fref{eq:not_so_fast},
which rules out all spurious solutions to \fref{eq:opt_q}.

We begin by establishing that $\costQ{}$ has
the quadratic form given by \fref{eq:quadratic}.
We recursively define the \emph{busy period started by job~$k$},
denoted $\mc{B}_k$,
to be the smallest set of jobs containing
\begin{itemize}
\item
  job~$k$ itself and
\item
  all jobs that arrive while serving
  any job in the busy period started by job~$k$.
\end{itemize}

Given any system state,
let $\cost[2]{\st{i_k}{x_k}, \st{i_\ell}{x_\ell}}$ be
the total expected queueing time in the remainder of the busy period
due to jobs in $\mc{B}_k$
waiting while jobs in $\mc{B}_\ell$ are in service,
plus vice versa.
We call this the ``interference'' between $\mc{B}_k$ and~$\mc{B}_\ell$.
To clarify, when there are $n$ jobs from $\mc{B}_k$ in the system,
then each instant serving a job in $\mc{B}_\ell$
counts for $n$ instants of queueing time.

The key observation is that \emph{interference is well defined},
even though it does not account for all jobs in the system.
This is because for the purposes of determining
the interference between $\mc{B}_k$ and~$\mc{B}_\ell$
we can imagine a separate ``$\mc{B}_k$ vs. $\mc{B}_\ell$'' process
which is paused whenever a job not in $\mc{B}_k$ or $\mc{B}_\ell$
is in service.
Because the arrival process is Poisson
and the Gittins policy is an index policy,
the $\mc{B}_k$ vs. $\mc{B}_\ell$ process is unaffected by
other jobs in the system.

The second term on the right-hand side of \fref{eq:quadratic}
accounts for all the remaining expected queueing time from state $\sts$
except for that incurred ``within'' each~$\mc{B}_k$.
We therefore define $\cost[1]{\st{i_k}{x_k}}$
to be the total expected queueing time due to jobs in $\mc{B}_k$
waiting while other jobs in $\mc{B}_k$ are in service,
which we can show is well defined
by an argument similar to that for $\cost[2]{}$ above.
This establishes that $\costQ{}$ has the quadratic from of \fref{eq:quadratic}.

We now compute $\cost[1]{}$ in terms of $\cost[2]{}$.
If a new job~$k'$ arrives when job~$k$ is in state $\st{j}{y}$,
then the interference between $\mc{B}_k$ and $\mc{B}_{k'}$
from that point onward is $\cost[2]{\st{j}{y}, \st{\init}{0}}$.
From this observation we directly compute
\begin{align*}
  \cost[1]{\st{i}{x}}
  &= \integral{x}{\infty}
      \lambda\frac{\tail[i]{y}}{\tail[i]{x}}
      \cost[2]{\st{i}{y}, \st{\init}{0}}
    \,dy \\
  &\qquad + \sum_{\mathclap{j \neq i}}\mkern3mu
      \integral{0}{\infty}
        \lambda q_{ij} \tail[j]{y}
        \cost[2]{\st{j}{y}, \st{\init}{0}}
      \,dy,
\end{align*}
where $q_{ij}\esub$ is the probability that
a job starting in stage~$i$ at some point reaches stage~$j$.

It remains only to characterize $\cost[2]{}$.
It is immediate from applying \fref{eq:quadratic}
to a system state with one job
that $\cost[1]{\st{i}{x}} = \costQ{\st{i}{x}}$.
By applying \fref{eq:opt_q} to a system state with two jobs, we obtain
\begin{equation}
  \label{eq:opt_interference}
  \min_k
  (\serve{k} + \arrive) \cost[2]{\st{i_1}{x_1}, \st{i_2}{x_2}}
  = -1
\end{equation}
Our next step is to turn this into
a single-variable differential equation similar to \fref{eq:opt_trivial}.
To do so, we consider the evolution of the system
assuming that no arrivals or state transitions occur.
We define $x_1(t)$, $x_2(t)$ and $k(t)$ such that
\begin{align*}
  x_1(t) - x_1 &= t - (x_2(t) - x_2) = \integral{0}{t} (2 - k(t)) \,dt \\
  k(t) &= \argmin_k \serve{k} \cost[2]{\st{i_1}{x_1}, \st{i_2}{x_2}}.
\end{align*}
That is, starting from system state
$(\st{i_1}{x_1}, \st{i_2}{x_2})$
and assuming no arrivals or stage transitions,
$x_1(t)$ and $x_2(t)$ are the ages of the stages at time~$t$
and $k(t)$ is the job being served at time~$t$.

Suppose temporarily that stages $i_1$ and $i_2$ are penultimate stages,
meaning they transition only to the final stage~$\fin$.
By \fref{eq:opt_interference},
\begin{equation}
  \label{eq:ddt_interference}
  \dd{t}\cost[2]{t} - \hazard{k(t)}{x_{k(t)}}\cost[2]{t}
  = -(1 + \arrive\cost[2]{t}),
\end{equation}
where we abbreviate
$\cost[2]{t} = \cost[2]{\st{i_1}{x_1(t)}, \st{i_2}{x_2(t)}}$.
We can interpret $\hazard{Y}{t} = \hazard{k(t)}{x_{k(t)}}$
as the hazard rate of a random variable~$Y$,
which is the time of the first stage transition.
It is straightforward to show that
finiteness of $\E{X_1}$ and $\E{X_2}$
implies finiteness of~$\E{Y}$.
We can use~$Y$ to express the possible solutions for $\cost[2]{t}$:
\begin{equation}
  \label{eq:cost_interference}
  \cost[2]{t}
  = \frac{c}{\tail[Y]{t}}
    + \integral{t}{\infty}
      \lambda\frac{\tail[Y]{u}}{\tail[Y]{t}}(1 + \arrive\cost[2]{t})
    \,du.
\end{equation}
Confirming \fref{eq:cost_q_many} entails showing $c = 0$.
By considering the policy that prioritizes new arrivals
above jobs~$1$ and~$2$,
performing the \emph{preemptive last-come, first-served} (PLCFS) policy
\citep{book_harchol-balter}
on the new arrivals,
we obtain the bound
\begin{align*}
  \cost[2]{t}
  \leq \frac{\load{}\E{Y - t \given Y \geq t}}{(1 - \load{})^2}
  = o\biggl(\frac{1}{\tail[Y]{t}}\biggr),
\end{align*}
so $c = 0$ in \fref{eq:cost_interference}, as desired.

We assumed temporarily that stages $i_1$ and~$i_2$ were penultimate stages.
Because there are no cyclic stage transitions,
we can iterate the argument to cover every stage.
The only change is an extra term on
the right-hand side of~\fref{eq:ddt_interference},
which does not substantially change the argument.

\end{document}